\documentclass[conference]{IEEEtran}
\IEEEoverridecommandlockouts

\newcommand{\longer}[1]{#1}
\newcommand{\shorter}[1]{}



\usepackage{cite}
\usepackage{latexsym}
\usepackage{xspace}
\usepackage{amssymb}
\usepackage{stmaryrd}
\usepackage{hyperref}
\usepackage{url}
\usepackage{graphicx}
\input{xy}
\xyoption{all}
\usepackage{calc}
\usepackage{tikz}
\usepackage{color}


\newtheorem{Theorem}{Theorem}
\newtheorem{Lemma}{Lemma}
\newtheorem{Corollary}{Corollary}
\newtheorem{Proposition}{Proposition}
\newtheorem{Claim}{Claim}

\newtheorem{Definition}{Definition}
\newtheorem{Example}{Example}
\newtheorem{Remark}{Remark}

\newenvironment{definition}
{\smallskip\begin{Definition}}{\end{Definition}\smallskip}

\newenvironment{example}{\smallskip\begin{Example}}{\end{Example}\smallskip}

\newenvironment{theorem}
{\smallskip
	\begin{Theorem}
		\begin{em}}
		{\end{em}
	\end{Theorem}
	\smallskip}

\newenvironment{lemma}
{\smallskip
	\begin{Lemma}
		\begin{em}}
		{\end{em}
	\end{Lemma}
	\smallskip}

\newenvironment{corollary}
{\smallskip
	\begin{Corollary}
		\begin{em}}
		{\end{em}
	\end{Corollary}}

\newenvironment{proposition}
{\smallskip
	\begin{Proposition}
		\begin{em}}
		{\end{em}
	\end{Proposition}
	\smallskip}

\newenvironment{myitemize}{\begin{itemize}}{\end{itemize}}
\newcommand{\myitem}{\item}


\def\eqref#1{(\ref{#1})}

\def\tuple#1{\langle#1\rangle}

\newcommand{\E}{\exists}



\newcommand{\mZ}{\mathcal{Z}}
\newcommand{\mS}{\mathbb{S}}
\newcommand{\mSp}{{\mathbb{S}'}}


\newcommand{\myend}{\mbox{}\hfill{\scriptsize$\blacksquare$}}

\newcommand{\comment}[1]{}

\newcommand{\fand}{\varotimes}

\newcommand{\fto}{\Rightarrow}

\newcommand{\fdPDL}{$\mathit{fPDL}_\triangle$\xspace}

\newcommand{\fpdPDL}{$\mathit{fPDL}^{\mathit{pos}}_\triangle$\xspace}
\newcommand{\fpdK}{$\mathit{fK}^{\mathit{pos}}_{\!\triangle}$\xspace}
\newcommand{\fpdKwxs}{$\mathit{fK}^{\mathit{pos}}_{\!\triangle}$}

\newcommand{\fpedPDL}{$\mathit{fPDL}^{\mathit{pos_\Diamond}}_\triangle$\xspace}
\newcommand{\fpudPDL}{$\mathit{fPDL}^{\mathit{pos_\Box}}_\triangle$\xspace}

\newcommand{\fedPDL}{$\mathit{fPDL}^\E_\triangle$\xspace}
\newcommand{\fedKz}{$\mathit{fK}^{\E,0}_{\!\triangle}$\xspace}
\newcommand{\fedKzwxs}{$\mathit{fK}^{\E,0}_{\!\triangle}$}

\newcommand{\FLTS}{FLTS\xspace}
\newcommand{\FLTSs}{FLTSs\xspace}

\newcommand{\SP}{\Sigma_L}
\newcommand{\SA}{\Sigma_A}


\makeatletter
\def\ps@IEEEtitlepagestyle{%
	\def\@oddfoot{\mycopyrightnotice}%
	\def\@evenfoot{}%
}
\def\mycopyrightnotice{%
	{\footnotesize This is a revised and extended version of a FUZZ-IEEE'2021 paper.\hfill} 
	\gdef\mycopyrightnotice{}
}

\begin{document}
\sloppy

\title{Characterizing Crisp Simulations and Crisp Directed Simulations between Fuzzy Labeled Transition Systems by Using Fuzzy Modal Logics}

\author{\IEEEauthorblockN{Linh Anh Nguyen}
\IEEEauthorblockA{\textit{Institute of Informatics} \\
\textit{University of Warsaw}\\
Warsaw, Poland\\
nguyen@mimuw.edu.pl}
\and
\IEEEauthorblockN{Ngoc-Thanh Nguyen}
\IEEEauthorblockA{\textit{Faculty of Computer Science and Management} \\
\textit{Wroclaw University of Science and Technology}\\
Wroclaw, Poland\\
Ngoc-Thanh.Nguyen@pwr.edu.pl}
}

\maketitle

\begin{abstract}
We formulate and prove logical characterizations of crisp simulations and crisp directed simulations between fuzzy labeled transition systems with respect to fuzzy modal logics that use a general t-norm-based semantics. The considered logics are fragments of the fuzzy propositional dynamic logic with the Baaz projection operator. The logical characterizations concern preservation of positive existential (respectively, positive) modal formulas under crisp simulations (respectively, crisp directed simulations), as well as the Hennessy-Milner property of such simulations. 
\end{abstract}

\begin{keywords}
fuzzy labeled transition systems, fuzzy modal logics, simulation, directed simulation. 
\end{keywords}


\section{Introduction}
\label{section:intro}

Like bisimulation, simulation is a well-known notion for comparing observational behaviors of automata and labeled transition systems (LTSs) \cite{Park81,DBLP:journals/fac/He89}. Given states $x$ and $x'$, we say that $x'$ {\em simulates} $x$ if the label of $x$ is a subset of the label of $x'$ and, for every $\sigma$-transition from $x$ to a state $y$, there exists a $\sigma$-transition from $x'$ to a state $y'$ that simulates $y$, where $\sigma$ is any action. Such defined simulations preserve positive existential modal formulas, which are modal formulas without implication, negation and universal modal operators. That is, if $x'$ simulates $x$, then it satisfies all positive existential modal formulas that $x$ does. Conversely, given image-finite LTSs $\mS$ and $\mSp$, if a state $x'$ of $\mSp$ satisfies all positive existential modal formulas that a state $x$ of $\mS$ does, then the pair $\tuple{x,x'}$ belongs to the largest simulation between $\mS$ and $\mSp$ (cf.\ \cite{thesis-Rijke,BRV2001}). This is called the Hennessy-Milner property of simulation. 

Directed simulation is a stronger notion than simulation. 
A~state $x'$ {\em directedly simulates} a state $x$ if: 
\begin{itemize}
\item the label of $x$ is a subset of the label of $x'$; 
\item for every $\sigma$-transition from $x$ to a state $y$, there exists a $\sigma$-transition from $x'$ to a state $y'$ that directedly simulates~$y$; 
\item for every $\sigma$-transition from $x'$ to a state $y'$, there exists a $\sigma$-transition from $x$ to a state $y$ that is directedly simulated by~$y'$.
\end{itemize}
Thus, directed simulation requires both the ``forward'' and ``backward'' conditions of bisimulation and is weaker than bisimulation only in that $x$ and $x'$ are not required to have the same label as in the case of bisimulation.
Directed simulation was introduced and studied by Kurtonina and de Rijke~\cite{KurtoninaR97} for modal logic. 
They proved that directed simulation characterizes the class of positive modal formulas, like simulation characterizes the class of positive existential modal formulas. Positive modal formulas are modal formulas without implication and negation. 
Directed simulation has also been formulated and studied for description logics~\cite{BSDL-P-LOGCOM}. 

For fuzzy structures like fuzzy automata and {\em fuzzy labeled transition systems} (\FLTSs), researchers have studied both crisp simulations~\cite{DamljanovicCI14,DBLP:journals/ijar/PanLC15,DBLP:journals/fss/WuD16,Nguyen-TFS2019,abs-2012-01845} and fuzzy simulations~\cite{CiricIDB12,DBLP:journals/ijar/PanC0C14,IgnjatovicCS15,DBLP:journals/ijar/PanLC15,TFS2020}. Crisp/fuzzy bisimulations have also been studied for fuzzy structures by a considerable number of researchers \cite{CaoCK11,CiricIDB12,EleftheriouKN12,CaoSWC13,DamljanovicCI14,ai/FanL14,Fan15,IgnjatovicCS15,DBLP:journals/fss/WuD16,DBLP:journals/ijar/WuCHC18,DBLP:journals/fss/WuCBD18,jBSfDL2,minimization-by-fBS,abs-2010-15671,abs-2101-12349}. However, as far as we know, only the work~\cite{abs-2012-01845} has concerned crisp directed simulations for fuzzy structures. It only deals with computational aspects. 

The current paper concerns logical characterizations of crisp simulations and crisp directed simulations for \FLTSs. 
As related works on logical characterizations of crisp/fuzzy bisimulations or simulations for fuzzy structures, the notable ones are the papers \cite{EleftheriouKN12,ai/FanL14,Fan15,jBSfDL2,abs-2101-12349} on fuzzy bisimulations, \cite{ai/FanL14,DBLP:journals/fss/WuD16,DBLP:journals/ijar/WuCHC18,DBLP:journals/fss/WuCBD18,jBSfDL2} on crisp bisimulations, \cite{DBLP:journals/ijar/PanC0C14,DBLP:journals/ijar/PanLC15} on fuzzy simulations, and \cite{DBLP:journals/ijar/PanLC15,DBLP:journals/fss/WuD16,Nguyen-TFS2019} on crisp simulations. 
We discuss below only the last three works. 

In~\cite{DBLP:journals/ijar/PanLC15} Pan {\em et at.}\ studied simulations for quantitative transition systems (QTSs), which are a variant of \FLTSs without labels for states. The authors provided logical characterizations of cut-based crisp simulations between finite QTSs w.r.t.\ an existential cut-based crisp Hennessy-Milner logic. A fuzzy threshold, used as the cut for the fuzzy equality relation between actions, is a parameter for both the crisp simulations and the crisp Hennessy-Milner logic under consideration. The main results of~\cite{DBLP:journals/ijar/PanLC15} are formulated only for the case when the underlying residuated lattice is a finite Heyting algebra. 

In~\cite{DBLP:journals/fss/WuD16} Wu and Deng provided a logical characterization of crisp simulations for \FLTSs w.r.t.\ a crisp Hennessy-Milner logic, which uses values from the interval $[0,1]$ as thresholds for modal operators. States of \FLTSs considered in~\cite{DBLP:journals/fss/WuD16} are not labeled. The logical characterization of crisp simulations provided in~\cite{DBLP:journals/fss/WuD16} is the Hennessy-Milner property formulated w.r.t.\ a crisp modal logic with a minimal set of constructors, namely with $\top$, $\land$ and $\tuple{a}_p$, where $a$ is an action and $p \in [0,1]$. 

In~\cite{Nguyen-TFS2019} Nguyen introduced and studied cut-based crisp simulations between fuzzy interpretations in fuzzy description logics under the Zadeh semantics. He provided results on preservation of information under such simulations and the Hennessy-Milner property of such simulations w.r.t.\ fuzzy description logics under the Zadeh semantics. 

As seen from the above discussion, logical characterizations of crisp simulations studied in~\cite{DBLP:journals/ijar/PanLC15,DBLP:journals/fss/WuD16} are formulated w.r.t.\ {\em crisp} modal logics, whereas logical characterizations of crisp simulations studied in~\cite{Nguyen-TFS2019} are formulated w.r.t.\ fuzzy description logics under the {\em Zadeh} semantics. In addition, crisp simulations studied in~\cite{DBLP:journals/ijar/PanLC15,Nguyen-TFS2019} are cut-based and, indeed, a form of fuzzy simulations. There was the lack of logical characterizations of crisp simulations between fuzzy structures w.r.t.\ {\em fuzzy} modal/description logics that use a {\em residuated lattice} or a {\em t-norm-based} semantics. Furthermore, logical characterizations of crisp/fuzzy directed simulations between fuzzy structures have not yet been studied. 

In this paper, we formulate and prove logical characterizations of crisp simulations and crisp directed simulations between \FLTSs w.r.t.\ {\em fuzzy} modal logics that use a general {\em t-norm-based} semantics. The considered logics are fragments of the fuzzy propositional dynamic logic with the Baaz projection operator (\fdPDL). The logical characterizations concern preservation of positive existential (resp.\ positive) modal formulas under crisp simulations (resp.\ crisp directed simulations), as well as the Hennessy-Milner property of such simulations. 

The rest of this paper is structured as follows. Section~\ref{section: prel} contains definitions about fuzzy sets, fuzzy operators, \FLTSs and \fdPDL. In Section~\ref{section: cs} (resp.~\ref{section: cds}), we define crisp simulations (resp.\ crisp directed simulations) between \FLTSs, formulate and prove their logical characterizations w.r.t.\ the positive existential (resp.\ positive) fragments of \fdPDL. Conclusions are given in Section~\ref{section: conc}.


\section{Preliminaries}
\label{section: prel}

\subsection{Fuzzy Sets and Fuzzy Operators}

A {\em fuzzy subset} of a set $X$ is a function \mbox{$f: X \to [0,1]$}. 
Given a fuzzy subset $f$ of $X$, $f(x)$ for $x \in X$ means the fuzzy degree of that $x$ belongs to the subset. 
For \mbox{$\{x_1,\ldots,x_n\} \subseteq X$} and \mbox{$\{a_1,\ldots,a_n\} \subset [0,1]$}, we write $\{x_1:a_1$, \ldots, \mbox{$x_n:a_n\}$} to denote the fuzzy subset $f$ of $X$ such that $f(x_i) = a_i$ for $1 \leq i \leq n$ and $f(x) = 0$ for $x \in X \setminus \{x_1,\ldots,x_n\}$. 
Given fuzzy subsets $f$ and $g$ of a set $X$, we write $f \leq g$ to denote that $f(x) \leq g(x)$ for all $x \in X$. 
A fuzzy subset of $X \times Y$ is called a {\em fuzzy relation} between $X$ and~$Y$. 

An operator \mbox{$\fand: [0,1] \times [0,1] \to [0,1]$} is called a {\em t-norm} if it is commutative and associative, has~1 as the neutral element, and is increasing w.r.t.\ both the arguments. If $L = [0,1]$, $\fand$ is a left-continuous t-norm and \mbox{$\fto\,: [0,1] \times [0,1] \to [0,1]$} is the operator defined by \mbox{$(x \fto y) =$} $\sup \{ z \mid z \fand x \leq y\}$, then $\fto$ is called the {\em residuum} of~$\fand$. 

\longer{The three well-known t-norms named after G\"odel, {\L}ukasiewicz and product are specified below:
\begin{eqnarray*}
x \fand_G y & \ =\ & \min\{x,y\}, \\
x \fand_{\scriptsize{\L}} y & \ =\ & \max\{0, x+y-1\}, \\
x \fand_P y & \ =\ & x \cdot y.
\end{eqnarray*}
The corresponding residua are specified below:
\begin{eqnarray*}
x \fto_G y & \ =\  & \textrm{if $x \leq y$ then 1 else $y$}, \\
x \fto_{\scriptsize{\L}} y & \ =\ & \min\{1, 1 - x + y\}, \\ 
x \fto_P y & \ =\ & \textrm{if $x \leq y$ then 1 else $y/x$}.
\end{eqnarray*}
}

From now on, let $\fand$ be an arbitrary left-continuous t-norm and $\fto$ be its residuum. 
\shorter{For example, we may have \mbox{$x \fand y = \min\{x,y\}$} and $(x \fto y)$ = (if $x \leq y$ then 1 else $y$), using the G\"odel family of fuzzy operators.} 
It is known that $\fto$ is decreasing w.r.t.\ the first argument and increasing w.r.t.\ the second argument. 
Furthermore, $(x \fto y) = 1$ iff $x \leq y$.  

Let $\triangle: [0,1] \times [0,1] \to [0,1]$ be the operator defined by $\triangle x =$ (if $x = 1$ then 1 else 0).

\subsection{Fuzzy Labeled Transition Systems} 

Let $\SA$ be a non-empty set of actions and $\SP$ a non-empty set of state labels. 
We use $\varrho$ to denote an element of $\SA$ and $p$ to denote an element of~$\SP$.

An \FLTS is a triple $\mS = \tuple{S,\delta,L}$, where $S$ is a non-empty set of {\em states}, \mbox{$\delta: S \times \SA \times S \to [0,1]$} is called the {\em transition function}, and \mbox{$L: S \to (\SP \to [0,1])$} is called the {\em state labeling function}. For $x,y \in S$, $\varrho \in \SA$ and $p \in \SP$, $\delta(x,\varrho,y)$ means the fuzzy degree of that there is a transition of the action~$\varrho$ from the state~$x$ to the state~$y$, whereas the fuzzy subset $L(x)$ of $\SP$ is the label of $x$ and $L(x)(p)$ means the fuzzy degree of that $p$ belongs to the label of~$x$. 

An \FLTS $\mS = \tuple{S,\delta,L}$ is {\em image-finite} if, for every $x \in S$ and $\varrho \in \SA$, the set $\{y \mid \delta(x,\varrho,y) > 0\}$ is finite. It is {\em finite} if $S$, $\SA$ and $\SP$ are finite.


\subsection{Fuzzy PDL with the Baaz Projection Operator}

We use $\tuple{\SA,\SP}$ as the signature for the logical languages considered in this article. By \fdPDL we denote the fuzzy propositional dynamic logic with the Baaz projection operator. {\em Programs} and {\em formulas} of \fdPDL over the signature $\tuple{\SA,\SP}$  are defined as follows:
\begin{itemize}
	\item actions from $\SA$ are programs of \fdPDL, 
	\item if $\alpha$ and $\beta$ are programs of \fdPDL, then $\alpha \circ \beta$, $\alpha \cup \beta$ and $\alpha^*$ are also programs of \fdPDL, 
	\item if $\varphi$ is a formula of \fdPDL, then $\varphi?$ is a program of \fdPDL, 
	\item values from the interval $[0,1]$ and propositions from $\SP$ are formulas of \fdPDL, 
	\item if $\varphi$ and $\psi$ are formulas of \fdPDL and $\alpha$ is a program of \fdPDL, then $\triangle \varphi$, $\varphi \land \psi$, $\varphi \lor \psi$, $\varphi \to \psi$, $[\alpha]\varphi$ and $\tuple{\alpha}\varphi$ are formulas of \fdPDL.  
\end{itemize} 

Note that we ignore negation, as a formula $\lnot\varphi$ is usually defined to be $\varphi \to 0$. 

Given a finite set $\Gamma = \{\varphi_1,\ldots,\varphi_n\}$ of formulas with $n \geq 0$, we define $\bigwedge\Gamma = \varphi_1 \land\ldots\land \varphi_n \land 1$. 

\begin{definition}\label{def: BHDJA}
Treating an \FLTS $\mS = \tuple{S,\delta,L}$ as a fuzzy Kripke model, a program $\alpha$ is interpreted in $\mS$ as a fuzzy relation $\alpha^\mS: S \times S \to [0,1]$, whereas a formula $\varphi$ is interpreted in $\mS$ as a fuzzy subset \mbox{$\varphi^\mS: S \to [0,1]$}. The functions $\alpha^\mS$ and $\varphi^\mS$ are specified as follows. 
\begin{eqnarray*}
\!\!\!\!\!\!\!\!\!\! \varrho^\mS(x,y) & = & \delta(x,\varrho,y) \\
\!\!\!\!\!\!\!\!\!\! (\varphi?)^\mS(x,y) & = & \textrm{(if $x = y$ then $\varphi^\mS(x)$ else 0)} \\
\!\!\!\!\!\!\!\!\!\! (\alpha \circ \beta)^\mS(x,y) & = & \sup\{\alpha^\mS(x,z) \fand \beta^\mS(z,y) \mid z \in S \} \\
\!\!\!\!\!\!\!\!\!\! (\alpha \cup \beta)^\mS(x,y) & = & \max\{\alpha^\mS(x,y),\beta^\mS(x,y)\} \\
\!\!\!\!\!\!\!\!\!\! (\alpha^*)^\mS(x,y) & = & \sup \{\textstyle\bigotimes\{\alpha^\mS(x_i,x_{i+1}) \mid 0 \leq i < n\} \\
\!\!\!\!\!\!\!\!\!\! && \qquad \mid n \geq 0,\ x_0,\ldots,x_n \in S,\\
\!\!\!\!\!\!\!\!\!\! && \qquad\quad x_0 = x,\ x_n = y\} \\
\!\!\!\!\!\!\!\!\!\! a^\mS(x) & = & a \\
\!\!\!\!\!\!\!\!\!\! p^\mS(x) & = & L(x,p) \\
\!\!\!\!\!\!\!\!\!\! (\triangle\varphi)^\mS(x) & = & \textrm{(if $\varphi^\mS(x) = 1$ then 1 else 0)} \\
\!\!\!\!\!\!\!\!\!\! (\varphi \land \psi)^\mS(x) & = & \min\{\varphi^\mS(x),\psi^\mS(x)\} \\
\!\!\!\!\!\!\!\!\!\! (\varphi \lor \psi)^\mS(x) & = & \max\{\varphi^\mS(x),\psi^\mS(x)\} \\
\!\!\!\!\!\!\!\!\!\! (\varphi \to \psi)^\mS(x) & = & (\varphi^\mS(x) \fto \psi^\mS(x)) \\
\!\!\!\!\!\!\!\!\!\! ([\alpha]\varphi)^\mS(x) & = & \inf \{\alpha^\mS(x,y) \fto \varphi^\mS(y) \mid y \in S\} \\ 
\!\!\!\!\!\!\!\!\!\! (\tuple{\alpha}\varphi)^\mS(x) & = & \sup \{\alpha^\mS(x,y) \fand \varphi^\mS(y) \mid y \in S\}.
\end{eqnarray*}

\vspace{-3.2ex}

\myend
\end{definition}


\section{Crisp Simulations between \FLTSs and Their Logical Characterizations}
\label{section: cs}

In this section, we first recall crisp simulations between \FLTSs, then define the positive existential fragments \fedPDL and \fedKz of \fdPDL, and finally formulate and prove logical characterizations of crisp simulations between \FLTSs w.r.t.\ these positive existential fragments of \fdPDL. 

\subsection{Crisp Simulations between \FLTSs}

This subsection is a reformulation of the corresponding one of~\cite{abs-2012-01845} (which concerns fuzzy graphs).

Let $\mS = \tuple{S, \delta, L}$ and $\mS' = \tuple{S', \delta', L'}$ be \FLTSs. 
A binary relation $Z \subseteq S \times S'$ is called a {\em (crisp) simulation} between $\mS$ and $\mS'$ if the following conditions hold for every $x,y \in S$, $x' \in S'$ and $\varrho \in \SA$:
\begin{eqnarray}
&&\!\!\!\!\!\!\!\!\!\!\!\!\!\!\! Z(x,x') \to (L(x) \leq L'(x')) \label{eq: CS 1} \\
&&\!\!\!\!\!\!\!\!\!\!\!\!\!\!\! [Z(x,x') \land (\delta(x,\varrho,y) > 0)] \to \nonumber \\ 
&&\!\!\!\!\!\!\!\!\!\!\!\!\!\!\! \qquad \E y' \in S'\,[(\delta(x,\varrho,y) \leq \delta'(x',\varrho,y')) \land Z(y,y')]. \label{eq: CS 2}
\end{eqnarray}
Here, $\to$ and $\land$ denote the usual crisp logical connectives. Thus, the above conditions mean that: 
\begin{itemize}
	\item[\eqref{eq: CS 1}] if $Z(x,x')$ holds, then $L(x) \leq L'(x')$;
	
	\smallskip
	
	\item[\eqref{eq: CS 2}] if $Z(x,x')$ holds and $\delta(x,\varrho,y) > 0$, then there exists $y' \in S'$ such that $\delta(x,\varrho,y) \leq \delta'(x',\varrho,y')$ and $Z(y,y')$ holds.
\end{itemize}

\begin{figure}[h]
	\begin{tikzpicture}[->,>=stealth,auto]
	\node (G) {$\mS$};
	\node (Gp) [node distance=4.5cm, right of=G] {$\,\mS'$};
	\node (uG) [node distance=1.9cm, below of=G] {};
	\node (a) [node distance=1.5cm, left of=uG] {$u_1: 0.7$};
	\node (virtual) [node distance=1.3cm, left of=a] {};
	\node (b) [node distance=1.5cm, right of=uG] {$u_2: 0.7$};
	\node (c) [node distance=3cm, below of=a] {$u_3: 0.6$};
	\node (d) [node distance=3cm, below of=b] {$u_4: 0.8$};
	\node (e) [node distance=1.9cm, below of=Gp] {$v_1: 0.7$};
	\node (f) [node distance=3cm, below of=e] {$v_2: 0.8$};
	\draw (a) to node{0.6} (b);
	\draw (b) to node[left,yshift=1mm]{0.5} (c);
	\draw (c) to node[below]{0.4} (d);
	\draw (b) edge [bend right=15] node[left]{0.6} (d);
	\draw (d) edge [bend right=15] node[right]{0.5} (b);
	\draw (e) edge [loop above,in=60,out=120,looseness=10] node{0.5} (e);
	\draw (e) edge [bend right=15] node[left]{0.6} (f);
	\draw (f) edge [bend right=15] node[right]{0.5} (e);
	\end{tikzpicture}
	\caption{An illustration for Example~\ref{example: HGDSK}.\label{fig: HGDSK}}
\end{figure}

\begin{example}\label{example: HGDSK}
Let $\SA = \{\varrho\}$, $\SP = \{p\}$ and let $\mS$ and $\mS'$ be the \FLTSs specified below and depicted in Fig.~\ref{fig: HGDSK}.
	\begin{myitemize}
		\myitem $\mS = \tuple{S$, $\delta$, $L}$, where $S = \{u_1$, $u_2$, $u_3$, $u_4\}$, 
		$L(u_1)(p) = 0.7$, $L(u_2)(p) = 0.7$, $L(u_3)(p) = 0.6$, $L(u_4)(p) = 0.8$ and 
		$\delta = \{\tuple{u_1,\varrho,u_2}:0.6$, $\tuple{u_2,\varrho,u_3}:0.5$, $\tuple{u_2,\varrho,u_4}:0.6$, $\tuple{u_3,\varrho,u_4}:0.4$, $\tuple{u_4,\varrho,u_2}:0.5\}$.
		
		\myitem $\mS' = \tuple{S'$, $\delta'$, $L'}$, where $S' = \{v_1$, $v_2\}$,
		$L'(v_1)(p) = 0.7$, $L'(v_2)(p) = 0.8$ and 
		$\delta' = \{\tuple{e,\varrho,e}:0.5$, $\tuple{e,\varrho,f}:0.6$, $\tuple{f,\varrho,e}:0.5\}$.
	\end{myitemize}
	
It can be checked that $\{\tuple{u_2,v_1}$, $\tuple{u_3,v_1}$, $\tuple{u_4,v_2}\}$ is the largest simulation between $\mS$ and $\mS'$. 
\myend
\end{example}

\longer{
A ({\em crisp}) {\em auto-simulation} of $\mS$ is a simulation between $\mS$ and itself. 

\begin{proposition}\label{prop: HGDFJ}
Let $\mS$, $\mS'$ and $\mS''$ be \FLTSs and let $\mS = \tuple{S, \delta, L}$. 
	\begin{enumerate}
		\item The relation $Z = \{\tuple{x,x} \mid x \in S\}$ is an auto-simulation of $\mS$.
		\item If $Z_1$ is a simulation between $\mS$ and $\mS'$, and $Z_2$ is a simulation between $\mS'$ and $\mS''$, then $Z_1 \circ Z_2$ is a simulation between $\mS$ and~$\mS''$.
		\item If $\mZ$ is a set of simulations between $\mS$ and $\mS'$, then $\bigcup\mZ$ is also a~simulation between $\mS$ and $\mS'$.
	\end{enumerate}   
\end{proposition}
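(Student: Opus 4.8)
The plan is to verify, in each of the three cases, the two defining conditions \eqref{eq: CS 1} and \eqref{eq: CS 2} of a crisp simulation directly from the definitions of the relations involved. All three parts are routine, so I would keep the bookkeeping uniform and reuse the same pattern: take an arbitrary pair in the candidate relation, unwind its definition to reach a relation for which the simulation conditions are already known, apply those conditions, and then repackage the resulting witness.

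For part~1, with $Z = \{\tuple{x,x} \mid x \in S\}$, condition \eqref{eq: CS 1} is immediate because $Z(x,x')$ forces $x = x'$, whence $L(x) \leq L(x')$ holds with equality. For \eqref{eq: CS 2}, given $Z(x,x')$ (so $x = x'$) and $\delta(x,\varrho,y) > 0$, I would take $y' = y$: then $\delta(x,\varrho,y) \leq \delta(x',\varrho,y')$ holds with equality and $Z(y,y')$ holds since $y = y'$.

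For part~2, reading the composition as $(Z_1 \circ Z_2)(x,x'')$ iff there is $x' \in S'$ with $Z_1(x,x')$ and $Z_2(x',x'')$, I would fix such a witness $x'$ and invoke the hypotheses on $Z_1$ and $Z_2$ in turn. Condition \eqref{eq: CS 1} follows by transitivity of the pointwise order $\leq$, from $L(x) \leq L'(x')$ and $L'(x') \leq L''(x'')$. For \eqref{eq: CS 2}, given $\delta(x,\varrho,y) > 0$, condition \eqref{eq: CS 2} for $Z_1$ yields $y' \in S'$ with $\delta(x,\varrho,y) \leq \delta'(x',\varrho,y')$ and $Z_1(y,y')$. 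The point I would flag here is that, before feeding this into $Z_2$, one must re-establish the positivity threshold: since $\delta'(x',\varrho,y') \geq \delta(x,\varrho,y) > 0$, indeed $\delta'(x',\varrho,y') > 0$, so condition \eqref{eq: CS 2} for $Z_2$ applies and yields $y'' \in S''$ with $\delta'(x',\varrho,y') \leq \delta''(x'',\varrho,y'')$ and $Z_2(y',y'')$. Chaining the inequalities gives $\delta(x,\varrho,y) \leq \delta''(x'',\varrho,y'')$, while $Z_1(y,y')$ together with $Z_2(y',y'')$ gives $(Z_1 \circ Z_2)(y,y'')$.

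For part~3, with $Z = \bigcup\mZ$, any pair in $Z$ already lies in some member $Z_i \in \mZ$, so both conditions are inherited directly: I would pick such a $Z_i$ witnessing $Z(x,x')$, apply \eqref{eq: CS 1} and \eqref{eq: CS 2} for $Z_i$, and observe that the witness $y'$ it produces satisfies $Z(y,y')$ because $Z_i \subseteq Z$. The only mildly subtle step across the whole proposition is the positivity check in part~2; everything else is an unwinding of definitions together with the transitivity and monotonicity of $\leq$.
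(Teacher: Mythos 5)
Your proof is correct and is exactly the routine verification that the paper omits with the remark that ``the proof of this proposition is straightforward'': unwind the definition of the candidate relation, apply Conditions~\eqref{eq: CS 1} and~\eqref{eq: CS 2} for the underlying simulations, and chain the inequalities. You are also right to flag the one non-vacuous step, namely that in part~2 the positivity hypothesis $\delta'(x',\varrho,y') > 0$ needed to apply Condition~\eqref{eq: CS 2} for $Z_2$ is recovered from $\delta'(x',\varrho,y') \geq \delta(x,\varrho,y) > 0$.
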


The proof of this proposition is straightforward. 

\begin{corollary}
The largest simulation between two arbitrary \FLTSs exists.
The largest auto-simulation of a \FLTS is a pre-order.
\end{corollary}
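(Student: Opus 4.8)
The plan is to derive both claims directly from Proposition~\ref{prop: HGDFJ} (the three closure properties of simulations), so that no fresh verification of conditions~\eqref{eq: CS 1} and~\eqref{eq: CS 2} is required.

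For the first claim, I would let $\mZ$ be the set of \emph{all} simulations between the two given \FLTSs $\mS$ and $\mS'$. This set is non-empty, since the empty relation vacuously satisfies~\eqref{eq: CS 1} and~\eqref{eq: CS 2}. By part~3 of Proposition~\ref{prop: HGDFJ}, the union $Z^* = \bigcup\mZ$ is again a simulation between $\mS$ and $\mS'$, so $Z^* \in \mZ$. By construction every $Z \in \mZ$ satisfies $Z \subseteq Z^*$, whence $Z^*$ is the largest simulation between $\mS$ and $\mS'$.

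For the second claim, I would specialize to $\mS' = \mS$ and take $Z^*$ to be the largest auto-simulation of $\mS$, which exists by the first claim. Since a pre-order is a reflexive and transitive relation, I would verify these two properties separately. Reflexivity follows from part~1: the identity relation $\{\tuple{x,x} \mid x \in S\}$ is an auto-simulation, hence a subset of $Z^*$, so $\tuple{x,x} \in Z^*$ for every $x \in S$. Transitivity follows from part~2 instantiated at $\mS'' = \mS' = \mS$: the composition $Z^* \circ Z^*$ is again an auto-simulation of $\mS$, so by maximality $Z^* \circ Z^* \subseteq Z^*$; unfolding the definition of relational composition, this says precisely that $\tuple{x,y} \in Z^*$ and $\tuple{y,z} \in Z^*$ imply $\tuple{x,z} \in Z^*$.

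I expect no genuine obstacle here: all the substance sits in Proposition~\ref{prop: HGDFJ}, and the corollary is just the standard fact that a family of relations closed under arbitrary unions possesses a largest element, which then inherits reflexivity and transitivity from closure under the identity relation and under composition. The only points I would make explicit are that $\mZ$ is non-empty and that the direction of the composition in part~2 aligns with the direction needed for transitivity.
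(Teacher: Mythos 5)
Your proof is correct and is exactly the intended derivation: the paper states this corollary as an immediate consequence of Proposition~\ref{prop: HGDFJ}, obtaining the largest simulation as the union of all simulations (part~3), reflexivity from part~1, and transitivity from part~2 together with maximality. No discrepancy with the paper's approach.
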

}

\subsection{The Positive Existential Fragment of \fdPDL}

By \fedPDL we denote the largest sublanguage of \fdPDL that disallow the formula constructor $[\alpha]\varphi$ and allows implication ($\to$) only in formulas of the form $a \to \varphi$ with $a \in [0,1]$. We call \fedPDL the {\em positive existential fragment of \fdPDL}. 

By \fedKz we denote the largest sublanguage of \fedPDL that disallow the program constructors ($\alpha \circ \beta$, $\alpha \cup \beta$, $\alpha^*$ and $\varphi?$), the disjunction operator ($\lor$) and the constructor $a$ (with $a \in [0,1]$). That is, only actions from $\SA$ are programs of \fedKz, and formulas of \fedKz are of the form $p$, $\triangle \varphi$, $\varphi \land \psi$, $a \to \varphi$ or $\tuple{\varrho}\varphi$, where $p \in \SP$, $a \in [0,1]$, $\varrho \in \SA$, and $\varphi$ and $\psi$ are formulas of \fedKz. 

An \FLTS $\mS = \tuple{S,\delta,L}$ is said to be {\em witnessed} w.r.t.\ \fedPDL if, for every formula $\varphi$ (resp.\ program $\alpha$) of \fedPDL and every $x,y \in S$, if the definition of $\varphi^\mS(x)$ (resp.\ $\alpha^\mS(x,y)$) in Definition~\ref{def: BHDJA} uses supremum, then the set under the supremum has the biggest element if it is non-empty. 

The notion of whether an \FLTS $\mS = \tuple{S,\delta,L}$ is {\em witnessed} w.r.t.\ \fedKz is defined analogously.  

Observe that if an \FLTS $\mS = \tuple{S,\delta,L}$ is finite, then it is witnessed w.r.t.\ \fedPDL and \fedKz. If $\mS$ is image-finite, then it is witnessed w.r.t.\ \fedKz.

\subsection{Logical Characterizations of Simulations between FLTSs}

A formula $\varphi$ is said to be {\em preserved} under simulations between \FLTSs if, for every \FLTSs $\mS = \tuple{S, \delta, L}$ and $\mS' = \tuple{S', \delta', L'}$ that are witnessed w.r.t.\ \fedPDL, for every simulation $Z$ between them, and for every $x \in S$ and $x' \in S'$, if $Z(x,x')$ holds, then $\varphi^\mS(x) \leq \varphi^\mSp(x')$.

\begin{theorem}\label{theorem: pres-sim}
All formulas of \fedPDL are preserved under simulations between \FLTSs.
\end{theorem}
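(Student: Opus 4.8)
The plan is to prove, by simultaneous structural induction on the programs and formulas of \fedPDL, two statements at once, for all \FLTSs $\mS = \tuple{S,\delta,L}$ and $\mS' = \tuple{S',\delta',L'}$ witnessed w.r.t.\ \fedPDL and every simulation $Z$ between them:
\begin{itemize}
\item (for a formula $\varphi$) if $Z(x,x')$ holds, then $\varphi^\mS(x) \leq \varphi^{\mS'}(x')$;
\item (for a program $\alpha$) if $Z(x,x')$ holds and $\alpha^\mS(x,y) > 0$, then there exists $y' \in S'$ such that $\alpha^\mS(x,y) \leq \alpha^{\mS'}(x',y')$ and $Z(y,y')$.
\end{itemize}
The second statement is exactly the lifting of the simulation condition \eqref{eq: CS 2} from atomic actions to arbitrary programs, and it is what drives the modal case of the first statement. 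The theorem is the first statement specialized to an arbitrary formula.

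For the base cases: a constant $a$ gives $a^\mS(x) = a = a^{\mS'}(x')$; a proposition $p$ gives $p^\mS(x) = L(x)(p) \leq L'(x')(p) = p^{\mS'}(x')$ directly from \eqref{eq: CS 1}; and an atomic action $\varrho$ is precisely \eqref{eq: CS 2}, establishing the program statement. For the formula induction, $\land$ and $\lor$ follow from monotonicity of $\min$ and $\max$ with the hypothesis; $\triangle\varphi$ follows since its value is $0$ or $1$ and $\varphi^\mS(x)=1$ forces $\varphi^{\mS'}(x')=1$; and $a \to \varphi$ follows because $\fto$ is increasing in its second argument, so $(a \fto \varphi^\mS(x)) \leq (a \fto \varphi^{\mS'}(x'))$ — here it is essential that implication is restricted to a constant antecedent, since a variable antecedent would make $\fto$ antitone in a component we cannot control. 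The crucial case is $\tuple{\alpha}\varphi$: since $\mS$ is witnessed, the supremum defining $(\tuple{\alpha}\varphi)^\mS(x)$ is attained at some $y$; if the attained value is $0$ we are done, otherwise $\alpha^\mS(x,y) > 0$, so the program statement yields $y'$ with $\alpha^\mS(x,y) \leq \alpha^{\mS'}(x',y')$ and $Z(y,y')$, while the formula hypothesis gives $\varphi^\mS(y) \leq \varphi^{\mS'}(y')$; monotonicity of $\fand$ then bounds the attained value by $\alpha^{\mS'}(x',y') \fand \varphi^{\mS'}(y') \leq (\tuple{\alpha}\varphi)^{\mS'}(x')$.

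For the program induction I would treat $\circ$, $\cup$, $\varphi?$ and $*$ analogously, each time using that $\mS$ is witnessed to replace the relevant supremum by an attained maximum and then chaining hypotheses along witnesses. For $\alpha \circ \beta$ one passes through the intermediate state realizing the sup, noting that a positive product forces both factors positive. For $\alpha \cup \beta$ one takes the branch realizing the max. For $\varphi?$ one sets $y' = x'$ and invokes the formula statement, which is the step that forces the induction to be simultaneous. For $\alpha^*$ one takes a witnessing path $x = x_0, \ldots, x_n = y$ and builds, by a nested induction on $n$, a matching path $x' = x'_0, \ldots, x'_n = y'$ in $\mS'$ with $Z(x_i,x'_i)$ and $\alpha^\mS(x_i,x_{i+1}) \leq \alpha^{\mS'}(x'_i,x'_{i+1})$ for each $i$, so that monotonicity of $\fand$ lifts the whole product; the degenerate $n = 0$ case uses the empty product $1$ and $y' = x'$.

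I expect the main obstacle to be the $\alpha^*$ case, where the nested induction on path length must be set up correctly and the empty product handled, together with — throughout the modal and program cases — the disciplined use of the witnessed assumption to turn suprema into attained maxima (only $\mS$ needs to be witnessed for this $\leq$ direction, since every quantity on the $\mS'$ side is merely lower-bounded by an achievable value). A secondary point, easy but indispensable, is the restriction of implication to the form $a \to \varphi$, on which the soundness of that case entirely rests.
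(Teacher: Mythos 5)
Your proposal is correct and follows essentially the same route as the paper: the theorem is derived from a simultaneous structural induction on formulas and programs, with the program-level assertion being exactly the lifting of Condition~\eqref{eq: CS 2} to arbitrary programs, and each case (including the witnessed-supremum argument for $\tuple{\alpha}\psi$ and the path-matching argument for $\alpha^*$) handled as in the paper's Lemma~\ref{lemma: pres-sim}. Your side remarks --- that only $\mS$ needs to be witnessed for this direction, and that soundness of the implication case rests on the constant antecedent --- are accurate refinements consistent with the paper's argument.
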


This theorem follows immediately from the first assertion of the following lemma.

\begin{lemma} \label{lemma: pres-sim}
Let $\mS = \tuple{S, \delta, L}$ and $\mS' = \tuple{S', \delta', L'}$ be \FLTSs witnessed w.r.t.\ \fedPDL and $Z$ be a simulation between them. Then, the following assertions hold for every $x,y \in S$, $x' \in S'$, every formula $\varphi$ and every program $\alpha$ of \fedPDL, where $\to$ and $\land$ are the usual crisp logical connectives:
\begin{eqnarray}
&&\!\!\!\!\!\!\!\!\!\!\!\!\!\!\! Z(x,x') \to (\varphi^\mS(x) \leq \varphi^\mSp(x')) \label{eq: CSx 1} \\[0.5ex]
&&\!\!\!\!\!\!\!\!\!\!\!\!\!\!\! [Z(x,x') \,\land\, (\alpha^\mS(x,y) \!>\! 0)] \to \nonumber\\
&&\!\!\!\!\!\!\!\!\!\!\!\!\!\!\! \qquad \E y' \in S'\, [(\alpha^\mS(x,y) \!\leq\! \alpha^\mSp(x',y')) \,\land\, Z(y,y')]. \label{eq: CSx 2}
\end{eqnarray}
\end{lemma}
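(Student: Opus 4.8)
The plan is to prove the two assertions \eqref{eq: CSx 1} and \eqref{eq: CSx 2} \emph{simultaneously} by well-founded induction on the combined structural complexity (say, the total size) of the formula $\varphi$ and the program $\alpha$ of \fedPDL. This mutual induction is forced by the syntax: formulas are built from programs through $\tuple{\alpha}\varphi$, and programs are built from formulas through $\varphi?$, so neither assertion can be established in isolation. The base cases are immediate from the simulation conditions. For a constant $a \in [0,1]$ we have $a^\mS(x) = a = a^\mSp(x')$. For a proposition $p \in \SP$, assertion \eqref{eq: CSx 1} reduces to $L(x)(p) \le L'(x')(p)$, which is exactly what \eqref{eq: CS 1} delivers once $Z(x,x')$ holds. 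For an atomic program $\varrho \in \SA$, assertion \eqref{eq: CSx 2} is literally condition \eqref{eq: CS 2}.

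\textbf{Inductive step for \eqref{eq: CSx 1}.} I would treat each formula constructor using monotonicity of the underlying operators together with the induction hypothesis. The cases $\varphi \land \psi$ and $\varphi \lor \psi$ follow because $\min$ and $\max$ are monotone; the case $\triangle\varphi$ follows because $(\triangle\varphi)^\mS(x)$ can only be $1$ when $\varphi^\mS(x) = 1$, and then the induction hypothesis forces $\varphi^\mSp(x') = 1$. The case $a \to \varphi$ is the place where the restriction built into \fedPDL is essential: since $\fto$ is increasing in its second argument, $a \fto \varphi^\mS(x) \le a \fto \varphi^\mSp(x')$ follows from the induction hypothesis, whereas a general antecedent would make $\fto$ \emph{decreasing} in its first argument and the argument would collapse. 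For $\tuple{\alpha}\varphi$ I would bound the defining supremum term by term: for each $y \in S$, if $\alpha^\mS(x,y) = 0$ then $\alpha^\mS(x,y) \fand \varphi^\mS(y) = 0$, while if $\alpha^\mS(x,y) > 0$ the hypothesis \eqref{eq: CSx 2} supplies some $y'$ with $Z(y,y')$ and $\alpha^\mS(x,y) \le \alpha^\mSp(x',y')$, and \eqref{eq: CSx 1} gives $\varphi^\mS(y) \le \varphi^\mSp(y')$; monotonicity of $\fand$ then bounds the term by $\alpha^\mSp(x',y') \fand \varphi^\mSp(y') \le (\tuple{\alpha}\varphi)^\mSp(x')$, so the supremum over $y$ is bounded as required.

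\textbf{Inductive step for \eqref{eq: CSx 2}.} Here I would use the witnessed assumption on $\mS$ to convert the suprema in the program semantics into attained maxima. For $\beta \cup \gamma$ one simply picks the disjunct realizing the $\max$ and applies the corresponding hypothesis. For $\varphi?$ the only case with positive value has $x = y$ and $\varphi^\mS(x) > 0$; taking $y' = x'$ works, using $Z(x,x')$ and \eqref{eq: CSx 1} for $\varphi$. The substantial cases are $\beta \circ \gamma$ and $\beta^*$. For $\beta \circ \gamma$, witnessedness yields an intermediate $z$ with $\beta^\mS(x,z) \fand \gamma^\mS(z,y)$ equal to the (positive) composite value; both factors are then positive, so two applications of \eqref{eq: CSx 2} --- first to $\beta$ to obtain $z'$ with $Z(z,z')$, then to $\gamma$ to obtain $y'$ with $Z(y,y')$ --- produce a matching $\mS'$-witness whose composite value dominates by monotonicity of $\fand$. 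For $\beta^*$, witnessedness provides a path $x = x_0, \ldots, x_n = y$ whose product attains $(\beta^*)^\mS(x,y)$; chaining \eqref{eq: CSx 2} for $\beta$ along this path (an inner induction on $n$, with the trivial path $n = 0$ handled by $y' = x'$) yields a parallel path in $\mS'$ whose product dominates, again by iterated monotonicity of $\fand$.

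\textbf{Main obstacle.} I expect the difficulty to lie in the program cases $\beta \circ \gamma$ and $\beta^*$: there the value is defined as a supremum over intermediate states or over arbitrarily long paths, and to produce a \emph{single} $\mS'$-state $y'$ with $Z(y,y')$ and $\alpha^\mS(x,y) \le \alpha^\mSp(x',y')$ I genuinely need the supremum to be attained, which is precisely why the witnessed hypothesis is assumed. A secondary technical point, used repeatedly, is the elementary fact that $0 \fand a = 0$ for every $a$ (so that zero-valued transitions can be discarded), together with the monotonicity of $\fand$, $\fto$, $\min$ and $\max$, which drives essentially every inductive case.
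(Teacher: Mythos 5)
Your proof is correct and follows essentially the same route as the paper's: a mutual structural induction on formulas and programs of \fedPDL, with the base cases supplied by Conditions~\eqref{eq: CS 1} and~\eqref{eq: CS 2}, the witnessed hypothesis used to extract maximizing intermediate states and paths in the $\beta\circ\gamma$ and $\beta^*$ cases, and monotonicity of $\fand$ and $\fto$ driving the remaining cases. The only (harmless) local difference is the $\tuple{\alpha}\psi$ case, which you settle by bounding the defining supremum term by term, whereas the paper argues by contradiction from a witnessing $y$; both work, and your variant in fact does not need witnessedness for that particular case.
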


\begin{proof}
We prove this lemma by induction on the structure of $\varphi$ and $\alpha$. 
	
Consider the assertion~\eqref{eq: CSx 1}. 
Assume that $Z(x,x')$ holds. We need to show that $\varphi^\mS(x) \leq \varphi^\mSp(x')$. 
The cases when $\varphi$ is a constant $a \in [0,1]$ or a proposition $p \in \SP$ are trivial. 
The cases when $\varphi$ is of the form $\triangle\psi$, $\psi \land \xi$ or $\psi \lor \xi$ are also straightforward by using the induction assumptions about $\psi$ and $\xi$ and the definition of $\varphi^\mS(x)$ and $\varphi^\mSp(x')$. 
The remaining cases are considered below.
\begin{myitemize}
\myitem Case $\varphi = a \to \psi$: By the induction assumption, $\psi^\mS(x) \leq \psi^\mSp(x')$. It is known that the residuum $\fto$ of every continuous t-norm $\fand$ is increasing w.r.t.\ the second argument. Hence, $\varphi^\mS(x) \leq \varphi^\mSp(x')$. 

\myitem Case $\varphi = \tuple{\alpha}\psi$: For a contradiction, assume that $\varphi^\mS(x) > \varphi^\mSp(x')$. 
Since $\mS$ is witnessed w.r.t.\ \fedPDL, there exists $y \in S$ such that $\varphi^\mS(x) = \alpha^\mS(x,y) \fand \psi^\mS(y)$. Since $\varphi^\mS(x) > \varphi^\mSp(x')$, it follows that $\varphi^\mS(x) > 0$ and therefore $\alpha^\mS(x,y) > 0$ (since $0 \fand a \leq 0 \fand 1 = 0$ for all $a \in [0,1]$). By the induction assumption of \eqref{eq: CSx 2}, there exists $y' \in S'$ such that $\alpha^\mS(x,y) \leq \alpha^\mSp(x',y')$ and $Z(y,y')$ holds. Since $Z(y,y')$ holds, by the induction assumption, $\psi^\mS(y) \leq \psi^\mSp(y')$. Since $\fand$ is increasing w.r.t.\ both the arguments, it follows that 
\[ \alpha^\mS(x,y) \fand \psi^\mS(y) \leq \alpha^\mSp(x',y') \fand \psi^\mSp(y'). \]
This contradicts the assumption $\varphi^\mS(x) > \varphi^\mSp(x')$.
\end{myitemize}

Consider the assertion~\eqref{eq: CSx 2}. The case when $\alpha$ is an action $\varrho \in \SA$ follows from Condition~\eqref{eq: CS 2}. The other cases are considered below.

\begin{myitemize}
\myitem Case $\alpha = \beta_1 \circ \beta_2$: Suppose that $Z(x,x')$ holds and $\alpha^\mS(x,y) > 0$. Thus, 
\[ \!\!\!\!\!\alpha^\mS(x,y) = \sup\{\beta_1^\mS(x,z) \fand \beta_2^\mS(z,y) \mid z \in S \} > 0. \]
Since $\mS$ is witnessed w.r.t.\ \fedPDL, there exists $z \in S$ such that 
\[ \alpha^\mS(x,y) = \beta_1^\mS(x,z) \fand \beta_2^\mS(z,y) > 0. \] 
Since $a \fand 0 = 0 \fand a = 0$ for all $a \in [0,1]$, we must have that $\beta_1^\mS(x,z) > 0$ and $\beta_2^\mS(z,y) > 0$. Since $Z(x,x')$ holds, by the induction assumption, there exists $z' \in S'$ such that $Z(z,z')$ holds and $\beta_1^\mS(x,z) \leq \beta_1^\mSp(x',z')$. Since $Z(z,z')$ holds and $\beta_2^\mS(z,y) > 0$, by the induction assumption, there exists $y' \in S'$ such that $Z(y,y')$ holds and $\beta_2^\mS(z,y) \leq \beta_2^\mSp(z',y')$. Since $\fand$ is increasing, it follows that 
\[
\beta_1^\mS(x,z) \fand \beta_2^\mS(z,y) \leq \beta_1^\mSp(x',z') \fand \beta_2^\mSp(z',y').
\]
Therefore, $\alpha^\mS(x,y) \leq \alpha^\mSp(x',y')$ and the induction hypothesis~\eqref{eq: CSx 2} holds.  

\myitem Case $\alpha = \beta_1 \cup \beta_2$: Suppose that $Z(x,x')$ holds and \mbox{$\alpha^\mS(x,y) > 0$}. Thus, $\max\{\beta_1^\mS(x,y)$, $\beta_2^\mS(x,y)\} > 0$. W.l.o.g.\ we assume that \mbox{$\beta_1^\mS(x,y) > 0$}. Since $Z(x,x')$ holds, by the induction assumption, it follows that there exists $y' \in S'$ such that $\beta_1^\mS(x,y) \leq \beta_1^\mSp(x',y')$ and $Z(y,y')$ holds. Therefore, $\alpha^\mS(x,y) =$ $\beta_1^\mS(x,y) \leq$ $\beta_1^\mSp(x',y') \leq$ $\alpha^\mSp(x',y')$ and the induction hypothesis \eqref{eq: CSx 2} holds. 

\myitem Case $\alpha = \beta^*$: Suppose that $Z(x,x')$ holds and $\alpha^\mS(x,y) > 0$. If $x = y$, then by taking $y' = x'$, $\alpha^\mSp(x',y') = 1 = \alpha^\mS(x,y)$ and $Z(y,y')$ holds. Assume that $x \neq y$. Since $\mS$ is witnessed w.r.t.\ \fedPDL, there exist $n \geq 1$ and $x_0,\ldots,x_n \in S$ such that $x_0 = x$, $x_n = y$ and $(\beta^*)^\mS(x,y) =$ \mbox{$\beta^\mS(x_0,x_1) \fand\cdots\fand \beta^\mS(x_{n-1},x_n)$}. Since $\alpha^\mS(x,y) > 0$, we must have that $\beta^\mS(x_i,x_{i+1}) > 0$ for all $0 \leq i < n$ (because $a \fand 0 = 0 \fand a = 0$ for all $a \in [0,1]$). Let $x'_0 = x'$. For each $i$ from $0$ to $n-1$, since $Z(x_i,x'_i)$ holds and $\beta^\mS(x_i,x_{i+1}) > 0$, by the induction assumption, there exists $x'_{i+1} \in S'$ such that $\beta^\mS(x_i,x_{i+1}) \leq \beta^\mSp(x'_i,x'_{i+1})$ and $Z(x_{i+1},x'_{i+1})$ holds. Take $y' = x'_n$. Thus, $Z(y,y')$ holds. Since $\fand$ is increasing,\\[0.5ex] 
\mbox{$\qquad \beta^\mS(x_0,x_1) \fand\cdots\fand \beta^\mS(x_{n-1},x_n) \leq$}\\
\mbox{$\qquad\qquad\qquad \beta^\mSp(x'_0,x'_1) \fand\cdots\fand \beta^\mSp(x'_{n-1},x'_n)$}.\\[0.5ex]
Hence, $\alpha^\mS(x,y) \leq \alpha^\mSp(x',y')$ and the induction hypothesis \eqref{eq: CSx 2} holds.

\myitem Case $\alpha = (\psi?)$: Suppose that $Z(x,x')$ holds and $\alpha^\mS(x,y) > 0$. Thus, $x = y$ and $\alpha^\mS(x,y) = \psi^\mS(x)$. Since $Z(x,x')$ holds, by the induction assumption~\eqref{eq: CSx 1}, $\psi^\mS(x) \leq \psi^\mSp(x')$. By choosing $y' = x'$, we have that $\alpha^\mS(x,y) = \psi^\mS(x) \leq \psi^\mSp(x') = \alpha^\mSp(x',y')$ and the induction hypothesis \eqref{eq: CSx 2} holds.

\vspace{-2.8ex}

\end{myitemize}
\end{proof}

The following lemma is a counterpart of Lemma~\ref{lemma: pres-sim} for \fedKz (instead of \fedPDL). 
Its proof can obtained from the proof of the assertion~\eqref{eq: CSx 1} of Lemma~\ref{lemma: pres-sim} by simplification, using~\eqref{eq: CS 2} instead of~\eqref{eq: CSx 2}.

\begin{lemma} \label{lemma: pres-sim 2}
Let $\mS = \tuple{S, \delta, L}$ and $\mS' = \tuple{S', \delta', L'}$ be \FLTSs witnessed w.r.t.\ \fedKz and $Z$ be a simulation between them. Then, for every $x \in S$ and $x' \in S'$, if $Z(x,x')$ holds, then $\varphi^\mS(x) \leq \varphi^\mSp(x')$ for all formulas $\varphi$ of \fedKz.
\end{lemma}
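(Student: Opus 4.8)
The plan is to establish the single inequality $Z(x,x') \to (\varphi^\mS(x) \leq \varphi^\mSp(x'))$ by induction on the structure of the \fedKz-formula $\varphi$, reusing the argument for assertion~\eqref{eq: CSx 1} of Lemma~\ref{lemma: pres-sim} but discarding everything that refers to compound programs. Because \fedKz admits only atomic programs (actions $\varrho \in \SA$), there is no program-level induction to carry along, so the companion assertion~\eqref{eq: CSx 2} is never invoked; its role in the diamond case is played directly by Condition~\eqref{eq: CS 2} of the definition of a simulation. Throughout I assume $Z(x,x')$ holds and aim to show $\varphi^\mS(x) \leq \varphi^\mSp(x')$.

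First I would dispatch the base and easy inductive cases. For $\varphi = p$ with $p \in \SP$, Condition~\eqref{eq: CS 1} gives $L(x) \leq L'(x')$, hence $p^\mS(x) = L(x)(p) \leq L'(x')(p) = p^\mSp(x')$. For $\varphi = \psi \land \xi$, the claim follows by applying the induction hypothesis to $\psi$ and $\xi$ and using that $\varphi^\mS$ is their pointwise minimum. For $\varphi = \triangle\psi$, the induction hypothesis $\psi^\mS(x) \leq \psi^\mSp(x')$ together with the definition of $\triangle$ settles it, since $\psi^\mS(x) = 1$ forces $\psi^\mSp(x') = 1$. For $\varphi = a \to \psi$ with $a \in [0,1]$, the induction hypothesis and the monotonicity of the residuum $\fto$ in its second argument give $a \fto \psi^\mS(x) \leq a \fto \psi^\mSp(x')$.

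The only case requiring real work is $\varphi = \tuple{\varrho}\psi$, and I would argue it by contradiction as in the $\tuple{\alpha}\psi$ case of Lemma~\ref{lemma: pres-sim}, specialized to $\alpha = \varrho$. Suppose $\varphi^\mS(x) > \varphi^\mSp(x')$. Since $\mS$ is witnessed w.r.t.\ \fedKz, the supremum defining $(\tuple{\varrho}\psi)^\mS(x)$ is attained, so there is $y \in S$ with $\varphi^\mS(x) = \delta(x,\varrho,y) \fand \psi^\mS(y)$. As $\varphi^\mSp(x') \geq 0$, the strict inequality forces $\varphi^\mS(x) > 0$ and hence $\delta(x,\varrho,y) > 0$. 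Condition~\eqref{eq: CS 2} then supplies $y' \in S'$ with $\delta(x,\varrho,y) \leq \delta'(x',\varrho,y')$ and $Z(y,y')$; the induction hypothesis gives $\psi^\mS(y) \leq \psi^\mSp(y')$, and monotonicity of $\fand$ in both arguments yields $\delta(x,\varrho,y) \fand \psi^\mS(y) \leq \delta'(x',\varrho,y') \fand \psi^\mSp(y') \leq \varphi^\mSp(x')$, contradicting the assumption.

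I expect the diamond case to be the only delicate point, and its sole subtlety is the reliance on the witnessed hypothesis: without it the defining supremum might fail to be attained, and one could not extract a single witness $y$ to feed into Condition~\eqref{eq: CS 2}. Everything else is a strict simplification of Lemma~\ref{lemma: pres-sim}, since \fedKz has neither program constructors, nor disjunction, nor the numeric constants $a$ as standalone formulas, so none of those cases arise at all.
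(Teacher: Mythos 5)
Your proof is correct and matches the paper's intended argument exactly: the paper itself states that Lemma~\ref{lemma: pres-sim 2} is obtained from the proof of assertion~\eqref{eq: CSx 1} of Lemma~\ref{lemma: pres-sim} by simplification, replacing the program-level assertion~\eqref{eq: CSx 2} with Condition~\eqref{eq: CS 2}, which is precisely what you do. All cases of the \fedKz grammar are covered and the use of the witnessed hypothesis in the diamond case is handled as in the paper.
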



An \FLTS $\mS = \tuple{S, \delta, L}$ is said to be {\em modally saturated} w.r.t.\ \fedKz if, for every $x \in S$, $a \in (0,1]$, $\varrho \in \SA$ and every infinite set $\Gamma$ of formulas of~\fedKz, if for every finite subset $\Psi$ of $\Gamma$ there exists $y \in S$ such that $\varrho^\mS(x,y) \fand (\bigwedge\!\Psi)^\mS(y) \geq a$, then there exists $y \in S$ such that $\varrho^\mS(x,y) \geq a$ and $\varphi^\mS(y) > 0$ for all $\varphi \in \Gamma$. 
The notion of modal saturatedness is a technical extension of image-finiteness. 
\longer{This is confirmed by the following proposition. 

\begin{proposition}
Every image-finite \FLTS is modally saturated w.r.t.\ \fedKz.
\end{proposition}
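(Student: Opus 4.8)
The plan is to fix an image-finite \FLTS $\mS = \tuple{S,\delta,L}$, together with arbitrary $x \in S$, $a \in (0,1]$, $\varrho \in \SA$ and an infinite set $\Gamma$ of \fedKz formulas satisfying the finite-satisfiability hypothesis, and then to manufacture a single witness $y$. First I would use image-finiteness to confine all potential witnesses to a finite set. The key observation is that any $y$ with $\varrho^\mS(x,y) \fand (\bigwedge\Psi)^\mS(y) \geq a$ (for some finite $\Psi \subseteq \Gamma$) must satisfy $\varrho^\mS(x,y) = \delta(x,\varrho,y) > 0$, because $a > 0$ while $0 \fand b = 0$ for every $b$. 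Hence every such witness lies in the set $Y = \{y \mid \delta(x,\varrho,y) > 0\}$, which is finite by image-finiteness.

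The second step is to split the product inequality into separate conditions on its two factors, using that $\fand$ is increasing in both arguments and has $1$ as its neutral element. Since $(\bigwedge\Psi)^\mS(y) \leq 1$, monotonicity gives $\varrho^\mS(x,y) \geq \varrho^\mS(x,y) \fand (\bigwedge\Psi)^\mS(y) \geq a$; and since the product is positive, $(\bigwedge\Psi)^\mS(y) = \min_{\varphi \in \Psi}\varphi^\mS(y) > 0$, so $\varphi^\mS(y) > 0$ for every $\varphi \in \Psi$. Consequently, for each finite $\Psi \subseteq \Gamma$ there is a witness lying in the finite set $Y_a = \{y \in Y \mid \varrho^\mS(x,y) \geq a\}$ at which every member of $\Psi$ takes a positive value.

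The final step is a pigeonhole argument over the finite set $Y_a$. Suppose, for contradiction, that no single $y \in Y_a$ satisfies $\varphi^\mS(y) > 0$ for all $\varphi \in \Gamma$. Then to each $y \in Y_a$ I can associate a formula $\varphi_y \in \Gamma$ with $\varphi_y^\mS(y) = 0$. Since $Y_a$ is finite, the collection $\Psi_0 = \{\varphi_y \mid y \in Y_a\}$ is a finite subset of $\Gamma$, so applying the hypothesis to $\Psi_0$ yields a witness $y_0 \in Y_a$ with $\varphi^\mS(y_0) > 0$ for all $\varphi \in \Psi_0$; this contradicts $\varphi_{y_0}^\mS(y_0) = 0$. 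Therefore the desired $y$ exists and satisfies both $\varrho^\mS(x,y) \geq a$ and $\varphi^\mS(y) > 0$ for all $\varphi \in \Gamma$, which is exactly modal saturatedness w.r.t.\ \fedKz.

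I do not anticipate a serious obstacle here: the argument is a routine finite-witness compactness. The only point genuinely requiring care is the manipulation of the t-norm inequality in the second step, where the neutral element and the monotonicity of $\fand$ must be invoked precisely to decouple the two factors $\varrho^\mS(x,y)$ and $(\bigwedge\Psi)^\mS(y)$; the essential ingredient making the whole scheme work is image-finiteness, since it is what collapses the otherwise infinitary selection of witnesses into a finite pigeonhole.
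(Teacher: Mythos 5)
Your proof is correct and follows essentially the same route as the paper's: both exploit image-finiteness to reduce the candidate witnesses to a finite set, pick one ``defeating'' formula of \fedKz per candidate, and assemble these into a finite subset of $\Gamma$ that contradicts the hypothesis. The only difference is presentational --- you argue directly and conclude with a contradiction over the finite set $\{y \mid \varrho^\mS(x,y) \geq a\}$, whereas the paper phrases the same argument as a contraposition over all of $S$.
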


\begin{proof}
Let $\mS = \tuple{S, \delta, L}$ be an image-finite \FLTS. 
Let $x \in S$, $a \in (0,1]$, $\varrho \in \SA$ and let $\Gamma$ be an infinite set of formulas of~\fedKz. 
We prove that $\mS$ is modally saturated w.r.t.\ \fedKz by contraposition. 
Suppose that, for every $y \in S$, there exists $\varphi_y \in \Gamma$ such that $\varrho^\mS(x,y) < a$ or $\varphi_y^\mS(y) = 0$. 
We need to prove that there exists a finite subset $\Psi$ of $\Gamma$ such that, for every $y \in S$, $\varrho^\mS(x,y) \fand (\bigwedge\!\Psi)^\mS(y) < a$. 
Let $\Psi = \{\varphi_y \mid y \in S$ and $\varrho^\mS(x,y) > 0\}$. Since $\mS$ is image-finite, $\Psi$ is finite. 
For every $y \in S$, since either $\varrho^\mS(x,y) < a$ or $\varphi_y \in \Psi$ and $\varphi_y^\mS(y) = 0$, we must have that $\varrho^\mS(x,y) \fand (\bigwedge\!\Psi)^\mS(y) < a$. This completes the proof. 
\end{proof}
}
\shorter{It can be checked that every image-finite \FLTS is modally saturated w.r.t.\ \fedKz.}

The following theorem states the Hennessy-Milner property of crisp simulations between \FLTSs. 

\begin{theorem} \label{theorem: sim HM}
Let $\mS = \tuple{S, \delta, L}$ and $\mS' = \tuple{S', \delta', L'}$ be \FLTSs witnessed and modally saturated w.r.t.\ \fedKz. 
Then, $Z = \{\tuple{x,x'} \in S \times S' \mid \varphi^\mS(x) \leq \varphi^\mSp(x')$ for all formulas $\varphi$ of \fedKzwxs$\}$ is the largest simulation between $\mS$ and~$\mSp$.
\end{theorem}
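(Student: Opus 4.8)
The plan is to prove the two halves of the claim separately: that $Z$ is a simulation between $\mS$ and $\mS'$, and that it contains every simulation between them (so that, being a simulation, it is the largest one). The second half follows immediately from Lemma~\ref{lemma: pres-sim 2}. Indeed, if $Z'$ is any simulation and $Z'(x,x')$ holds, then $\varphi^\mS(x) \leq \varphi^{\mS'}(x')$ for every formula $\varphi$ of \fedKz, which is exactly the condition defining membership of $\tuple{x,x'}$ in $Z$; hence $Z' \subseteq Z$, and it suffices to show that $Z$ itself is a simulation.

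To show $Z$ is a simulation I must verify \eqref{eq: CS 1} and \eqref{eq: CS 2}. Condition \eqref{eq: CS 1} is trivial: if $Z(x,x')$ holds then, for each $p \in \SP$, the proposition $p$ is a formula of \fedKz with $p^\mS(x) = L(x)(p)$, so $L(x)(p) = p^\mS(x) \leq p^{\mS'}(x') = L'(x')(p)$, giving $L(x) \leq L'(x')$. For \eqref{eq: CS 2} I assume $Z(x,x')$ and $a := \varrho^\mS(x,y) > 0$ and must find $y' \in S'$ with $\varrho^{\mS'}(x',y') \geq a$ and $Z(y,y')$. The device I would use is the \emph{threshold formula}: from the semantics of $\triangle$ and $\fto$, for any formula $\psi$ and any $b \in [0,1]$ one has $\triangle(b \to \psi)^{\mS'}(z) > 0$ iff $\psi^{\mS'}(z) \geq b$. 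Accordingly I set $\Gamma = \{\triangle(\psi^\mS(y) \to \psi) \mid \psi \text{ a formula of \fedKzwxs}\}$, an infinite set of \fedKz-formulas chosen so that ``$\chi^{\mS'}(y') > 0$ for all $\chi \in \Gamma$'' is equivalent to ``$\psi^{\mS'}(y') \geq \psi^\mS(y)$ for all $\psi$'', i.e.\ to $Z(y,y')$.

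The crux is then to invoke modal saturatedness of $\mS'$ at $x'$ with value $a$, action $\varrho$ and the set $\Gamma$, for which I must verify its premise: every finite $\Psi \subseteq \Gamma$ admits some $z \in S'$ with $\varrho^{\mS'}(x',z) \fand (\bigwedge\Psi)^{\mS'}(z) \geq a$. Given such a $\Psi$, I would form the conjunction $\chi$ of its members, a genuine \fedKz-formula whose value equals that of $\bigwedge\Psi$, and consider $\varphi = \tuple{\varrho}\chi$. Evaluating the conjunction at $y$ gives $\chi^\mS(y) = 1$, hence $\varphi^\mS(x) \geq \varrho^\mS(x,y) \fand \chi^\mS(y) = a \fand 1 = a$; since $Z(x,x')$ and $\varphi$ is a \fedKz-formula, $\varphi^{\mS'}(x') \geq \varphi^\mS(x) \geq a$. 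Because $\mS'$ is witnessed w.r.t.\ \fedKz and $S' \neq \emptyset$, the supremum defining $\varphi^{\mS'}(x')$ is attained, producing the required $z$ (the empty $\Psi$ is covered a fortiori, since a witness for any singleton subset already satisfies the weaker empty-conjunction demand). Modal saturatedness then yields $y' \in S'$ with $\varrho^{\mS'}(x',y') \geq a$ and $\chi^{\mS'}(y') > 0$ for all $\chi \in \Gamma$, that is, $\delta'(x',\varrho,y') \geq \delta(x,\varrho,y)$ and $Z(y,y')$, which is \eqref{eq: CS 2}.

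I expect the main obstacle to be condition \eqref{eq: CS 2}, and within it two points. First, engineering $\Gamma$ so that \emph{positivity} of all its members at $y'$ (the only kind of conclusion modal saturatedness supplies) encodes the \emph{inequalities} required for $Z(y,y')$; this is exactly what the threshold formulas $\triangle(\psi^\mS(y)\to\psi)$ accomplish, exploiting that $\triangle$ and the residuum together express ``$\geq$''. Second, checking the finite-subset premise while staying within the restricted syntax of \fedKz (which admits no bare constants and no disjunction) and correctly extracting the witness $z$ from the witnessed supremum. Maximality and condition \eqref{eq: CS 1} are routine by comparison.
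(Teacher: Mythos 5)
Your proposal is correct and matches the paper's proof in all essential respects: maximality is reduced to Lemma~\ref{lemma: pres-sim 2}, Condition~\eqref{eq: CS 1} is immediate from the definition of $Z$, and Condition~\eqref{eq: CS 2} is obtained from the threshold formulas $\triangle(\psi^\mS(y)\to\psi)$ together with modal saturatedness and witnessedness applied to $\tuple{\varrho}\bigwedge\Psi$. The only divergence is organizational: the paper argues by contradiction, letting $\Gamma$ contain one distinguishing formula for each candidate $y'$ with $\varrho^\mSp(x',y')\geq a$ and invoking the contrapositive of modal saturatedness, while you take $\Gamma$ to be the full (hence automatically infinite) family of threshold formulas and apply saturatedness directly.
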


\begin{proof}
By Lemma~\ref{lemma: pres-sim 2}, it is sufficient to prove that the considered $Z$ is a simulation between $\mS$ and~$\mSp$. Let $x,y \in S$, $x' \in S'$ and $\varrho \in \SA$. We need to prove Conditions~\eqref{eq: CS 1} and~\eqref{eq: CS 2}. 

Condition~\eqref{eq: CS 1} holds by the definition of~$Z$. 

Consider Condition~\eqref{eq: CS 2} and suppose that $Z(x,x')$ holds and $\delta(x,\varrho,y) = a > 0$. 
Let $Y' = \{y' \in S' \mid \varrho^\mSp(x',y') \geq a\}$. We need to show that there exists $y' \in Y'$ such that $Z(y,y')$ holds. For a contradiction, suppose that, for each $y' \in Y'$, $Z(y,y')$ does not hold, which means there exists a formula $\varphi_{y'}$ of \fedKz such that $\varphi_{y'}^\mS(y) > \varphi_{y'}^\mSp(y')$. For every $y' \in Y'$, let $\psi_{y'} = \triangle(\varphi_{y'}^\mS(y) \to \varphi_{y'})$, which is a formula of \fedKz. Observe that, for every $y' \in Y'$, $\psi_{y'}^\mS(y) = 1$ and $\psi_{y'}^\mSp(y') = 0$. Let $\Gamma = \{\psi_{y'} \mid y' \in Y'\}$. Observe that, for every $y' \in S'$, either $\varrho(x',y') < a$ or there exists $\psi = \psi_{y'} \in \Gamma$ such that $\psi^\mSp(y') = 0$. Since $\mSp$ is modally saturated w.r.t.\ \fedKz, there exists a finite subset $\Psi$ of $\Gamma$ such that, for every $y' \in S'$, $\varrho^\mSp(x',y') \fand (\bigwedge\!\Psi)^\mSp(y') < a$. Let $\varphi = \tuple{\varrho}\bigwedge\!\Psi$. It is a formula of \fedKz. Since $\mSp$ is witnessed w.r.t.\ \fedKz, it follows that $\varphi^\mSp(x') < a$. Since $\psi^\mS(y) = 1$ for all $\psi \in \Psi$, $(\bigwedge\!\Psi)^\mS(y) = 1$ and $\varphi^\mS(x) \geq a$. Thus, $\varphi^\mS(x) > \varphi^\mSp(x')$, which contradicts the assumption that $Z(x,x')$ holds. 
\end{proof}

Let $\mS = \tuple{S, \delta, L}$ and $\mS' = \tuple{S', \delta', L'}$ be \FLTSs and let $x \in S$ and $x' \in S'$. We write $x \lesssim^s x'$ to denote that there exists a simulation $Z$ between $\mS$ and $\mSp$ such that $Z(x,x')$ holds. We also write \mbox{$x \leq^\E x'$} (resp.\ \mbox{$x \leq^{\E,0}_K x'$}) to denote that $\varphi^\mS(x) \leq \varphi^\mSp(x')$ for all formulas $\varphi$ of~\fedPDL (resp.\ \fedKz). The following corollary follows immediately from Theorems~\ref{theorem: sim HM} and~\ref{theorem: pres-sim}. 

\begin{corollary} \label{cor: CS}
Let $\mS = \tuple{S, \delta, L}$ and $\mS' = \tuple{S', \delta', L'}$ be \FLTSs and let $x \in S$ and $x' \in S'$. 
\begin{enumerate}
\item If $\mS$ and $\mSp$ are witnessed and modally saturated w.r.t.~\fedKz, then
\longer{\[ x \lesssim^s x'\ \ \textrm{iff}\ \ x \leq^{\E,0}_K x', \]}
\shorter{$x \lesssim^s x'$ iff $x \leq^{\E,0}_K x'$,}
and therefore whether $x \leq^{\E,0}_K x'$ or not does not depend on the used t-norm~$\fand$. 
	
\item If $\mS$ and $\mSp$ are witnessed w.r.t.~\fedPDL and modally saturated w.r.t.~\fedKz, then
\longer{\[ x \leq^\E x'\ \ \textrm{iff}\ \ x \lesssim^s x'\ \ \textrm{iff}\ \ x \leq^{\E,0}_K x', \]}
\shorter{$x \leq^\E x'$ iff $x \lesssim^s x'$ iff $x \leq^{\E,0}_K x'$,} 
and therefore whether $x \leq^\E x'$ or not does not depend on the used t-norm~$\fand$. 
\end{enumerate}	
\end{corollary}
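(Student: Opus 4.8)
The plan is to obtain both parts by chaining the Hennessy-Milner theorem (Theorem~\ref{theorem: sim HM}) with the preservation results, exploiting two trivial facts: that \fedKz is a sublanguage of \fedPDL, so every \fedKz formula is a \fedPDL formula and witnessedness w.r.t.\ \fedPDL entails witnessedness w.r.t.\ \fedKz; and that the defining Conditions~\eqref{eq: CS 1} and~\eqref{eq: CS 2} of a simulation mention only $\leq$ between labels and transition degrees together with the crisp connectives, so the relation $\lesssim^s$ makes no reference to the t-norm $\fand$.

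For part~1, assume $\mS$ and $\mSp$ are witnessed and modally saturated w.r.t.\ \fedKz. I would prove $x \lesssim^s x' \Leftrightarrow x \leq^{\E,0}_K x'$ by its two directions. The forward direction is immediate from Lemma~\ref{lemma: pres-sim 2}: a simulation $Z$ with $Z(x,x')$ forces $\varphi^\mS(x) \leq \varphi^\mSp(x')$ for every \fedKz formula $\varphi$, i.e.\ $x \leq^{\E,0}_K x'$. The backward direction is exactly Theorem~\ref{theorem: sim HM}: if $x \leq^{\E,0}_K x'$, then $\tuple{x,x'}$ lies in the relation that the theorem identifies as the largest simulation, whence $x \lesssim^s x'$. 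For the t-norm-independence, I would argue that $\lesssim^s$ is t-norm-free by the observation above, and then transport this to $\leq^{\E,0}_K$ through the just-established equivalence.

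For part~2, assume instead that $\mS$ and $\mSp$ are witnessed w.r.t.\ \fedPDL and modally saturated w.r.t.\ \fedKz. Since witnessedness w.r.t.\ \fedPDL implies witnessedness w.r.t.\ \fedKz, the hypotheses of part~1 hold, giving $x \lesssim^s x' \Leftrightarrow x \leq^{\E,0}_K x'$. To insert $\leq^\E$ into the equivalence I would close a three-implication cycle: $x \leq^\E x' \Rightarrow x \leq^{\E,0}_K x'$ because \fedKz is a sublanguage of \fedPDL; $x \leq^{\E,0}_K x' \Rightarrow x \lesssim^s x'$ by part~1; and $x \lesssim^s x' \Rightarrow x \leq^\E x'$ by Theorem~\ref{theorem: pres-sim}, whose witnessedness hypothesis is now available. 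The t-norm-independence of $\leq^\E$ then follows from that of $\lesssim^s$.

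I do not expect a real obstacle: the corollary is a repackaging of the preservation theorem and the Hennessy-Milner theorem. The only point needing care is that the two theorems are stated over different fragments, so the glue consists entirely of the monotonicity-in-the-language remarks: witnessedness and preservation pass down from \fedPDL to the sublanguage \fedKz, and a \fedKz inequality is entailed by the corresponding \fedPDL inequality. These should be spelled out explicitly rather than used silently, so that the chain of implications type-checks across fragments.
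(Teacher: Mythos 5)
Your proposal is correct and matches the paper's intent: the paper proves this corollary simply by declaring it an immediate consequence of Theorems~\ref{theorem: sim HM} and~\ref{theorem: pres-sim}, and your chain of implications (Lemma~\ref{lemma: pres-sim 2} or equivalently the largeness claim in Theorem~\ref{theorem: sim HM} for the forward direction, Theorem~\ref{theorem: sim HM} for the converse, Theorem~\ref{theorem: pres-sim} plus the sublanguage inclusions for part~2, and the t-norm-freeness of the crisp simulation conditions for the independence claims) is exactly the unpacking the authors leave implicit.
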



\section{Crisp Directed Simulations between FLTSs and Their Logical Characterizations}
\label{section: cds}

In this section, we first recall crisp directed simulations between \FLTSs, then define the positive fragments \fpdPDL and \fpdK of \fdPDL, and finally formulate and prove logical characterizations of crisp directed simulations between \FLTSs w.r.t.\ these positive fragments of \fdPDL. 

\subsection{Crisp Directed Simulations between FLTSs}

This subsection is a reformulation of the corresponding one of~\cite{abs-2012-01845} (which concerns fuzzy graphs).

Let $\mS = \tuple{S, \delta, L}$ and $\mS' = \tuple{S', \delta', L'}$ be \FLTSs. 
A binary relation $Z \subseteq S \times S'$ is called a {\em (crisp) directed simulation} between $\mS$ and $\mS'$ if it satisfies Conditions~\eqref{eq: CS 1} and~\eqref{eq: CS 2} (of simulations) and the following one for every $x \in S$, $x', y' \in S'$ and $\varrho \in \SA$, where $\to$ and $\land$ denote the usual crisp logical connectives: 
\begin{eqnarray}
&&\!\!\!\!\!\!\!\!\!\!\!\!\!\!\! [Z(x,x') \land (\delta'(x',\varrho,y') > 0)] \to \nonumber \\ 
&&\!\!\!\!\!\!\!\!\!\!\!\!\!\!\! \qquad \E y \in S\,[(\delta'(x',\varrho,y') \leq \delta(x,\varrho,y)) \land Z(y,y')]. \label{eq: CS 3}
\end{eqnarray}

\begin{figure}[h]
	\begin{tikzpicture}[->,>=stealth,auto]
	\node (G) {$\mS_2$};
	\node (Gp) [node distance=4.5cm, right of=G] {$\,\mS_2'$};
	\node (uG) [node distance=1.9cm, below of=G] {};
	\node (a) [node distance=1.5cm, left of=uG] {$u_1: 0.7$};
	\node (virtual) [node distance=1.3cm, left of=a] {};
	\node (b) [node distance=1.5cm, right of=uG] {$u_2: 0.5$};
	\node (c) [node distance=3cm, below of=a] {$u_3: 0.6$};
	\node (d) [node distance=3cm, below of=b] {$u_4: 0.7$};
	\node (e) [node distance=1.9cm, below of=Gp] {$v_1: 0.7$};
	\node (f) [node distance=3cm, below of=e] {$v_2: 0.8$};
	\draw (a) to node{0.7} (b);
	\draw (b) to node[left,yshift=1mm]{0.5} (c);
	\draw (c) to node[below]{0.6} (d);
	\draw (b) edge [bend right=15] node[left]{0.6} (d);
	\draw (d) edge [bend right=15] node[right]{0.6} (b);
	\draw (e) edge [loop above,in=60,out=120,looseness=10] node{0.6} (e);
	\draw (e) edge [bend right=15] node[left]{0.5} (f);
	\draw (f) edge [bend right=15] node[right]{0.6} (e);
	\end{tikzpicture}
\caption{An illustration for Example~\ref{example: HGDSK 2}.\label{fig: HGDSK 2}}
\end{figure}

\begin{example}\label{example: HGDSK 2}
Let $\SA = \{\varrho\}$ and $\SP = \{p\}$. 
Reconsider the \FLTSs $\mS$ and $\mSp$ specified in Example~\ref{example: HGDSK}. 
It can be checked that $\emptyset$ is the unique directed simulation between $\mS$ and $\mSp$.

Let $\mS_2$ and $\mS_2'$ be the \FLTSs illustrated in Fig.~\ref{fig: HGDSK 2} and specified in a similar way as done for $\mS$ and $\mS'$ in Example~\ref{example: HGDSK}. It is straightforward to show that $Z = \{u_2,u_3,u_4\} \times \{v_1,v_2\}$ is the largest directed simulation between $\mS_2$ and $\mS'_2$. 
\myend
\end{example}

\longer{
A ({\em crisp}) {\em directed auto-simulation} of an \FLTS $\mS$ is a directed simulation between $\mS$ and itself. 

The following proposition is a counterpart of Proposition~\ref{prop: HGDFJ 2}. Its proof is also straightforward.

\begin{proposition}\label{prop: HGDFJ 2}
Let $\mS$, $\mS'$ and $\mS''$ be \FLTSs and let $\mS = \tuple{S, \delta, L}$. 
\begin{enumerate}
\item The relation $Z = \{\tuple{x,x} \mid x \in S\}$ is a directed auto-simulation of $\mS$.
\item If $Z_1$ is a directed simulation between $\mS$ and $\mS'$, and $Z_2$ is a directed simulation between $\mS'$ and $\mS''$, then $Z_1 \circ Z_2$ is a directed simulation between $\mS$ and~$\mS''$.
\item If $\mZ$ is a set of directed simulations between $\mS$ and $\mS'$, then $\bigcup\mZ$ is also a directed simulation between $\mS$ and~$\mS'$.
\end{enumerate}   
\end{proposition}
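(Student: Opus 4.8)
The plan is to verify, in each of the three cases, the three defining conditions of a directed simulation, namely \eqref{eq: CS 1}, \eqref{eq: CS 2} and \eqref{eq: CS 3}. Since a directed simulation is exactly a simulation that additionally satisfies the backward condition \eqref{eq: CS 3}, and the conditions \eqref{eq: CS 1} and \eqref{eq: CS 2} can be checked verbatim as in the proof of Proposition~\ref{prop: HGDFJ}, the only genuinely new work is to establish \eqref{eq: CS 3} and to confirm that relational composition and union interact correctly with it. Throughout I will repeatedly use that $\leq$ on $[0,1]$ is transitive and that positivity is preserved upward, i.e.\ if $a \leq \delta(x,\varrho,y)$ and $a > 0$ then $\delta(x,\varrho,y) > 0$; this is what licenses triggering the next application of a condition.

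For part~1, take $Z = \{\tuple{x,x} \mid x \in S\}$ and $\mS' = \mS$. Condition \eqref{eq: CS 1} is immediate since $Z(x,x)$ forces $L(x) \leq L(x)$. For \eqref{eq: CS 2}, if $Z(x,x)$ and $\delta(x,\varrho,y) > 0$, then choosing the witness $y' = y$ gives $\delta(x,\varrho,y) \leq \delta(x,\varrho,y)$ and $Z(y,y)$; condition \eqref{eq: CS 3} is handled identically by symmetry, again taking the witness equal to the given state. For part~3, let $Z = \bigcup\mZ$ and suppose $Z(x,x')$; then $Z_0(x,x')$ for some $Z_0 \in \mZ$. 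Applying the corresponding condition of the directed simulation $Z_0$ yields the required label inclusion or the required witness $y'$ (resp.\ $y$), and since that witness satisfies $Z_0(y,y')$ it also satisfies $(\bigcup\mZ)(y,y')$. Thus all three conditions lift from any single member to the union.

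The substantive case is part~2, the composition $Z_1 \circ Z_2$, where $(x,x'') \in Z_1 \circ Z_2$ means there is $x' \in S'$ with $Z_1(x,x')$ and $Z_2(x',x'')$. Condition \eqref{eq: CS 1} follows by chaining $L(x) \leq L'(x')$ (from $Z_1$) with $L'(x') \leq L''(x'')$ (from $Z_2$). For the forward condition \eqref{eq: CS 2}, given $\delta(x,\varrho,y) > 0$ I first apply \eqref{eq: CS 2} for $Z_1$ to obtain $y' \in S'$ with $\delta(x,\varrho,y) \leq \delta'(x',\varrho,y')$ and $Z_1(y,y')$; the inequality forces $\delta'(x',\varrho,y') > 0$, so I may apply \eqref{eq: CS 2} for $Z_2$ to obtain $y'' \in S''$ with $\delta'(x',\varrho,y') \leq \delta''(x'',\varrho,y'')$ and $Z_2(y',y'')$, whence transitivity together with $Z_1(y,y') \land Z_2(y',y'')$ produces the witness $y''$. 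The backward condition \eqref{eq: CS 3} is the mirror image: starting from $\delta''(x'',\varrho,y'') > 0$, I apply \eqref{eq: CS 3} for $Z_2$ first (obtaining $y' \in S'$ with $\delta''(x'',\varrho,y'') \leq \delta'(x',\varrho,y')$ and $Z_2(y',y'')$) and then \eqref{eq: CS 3} for $Z_1$ (obtaining $y \in S$ with $\delta'(x',\varrho,y') \leq \delta(x,\varrho,y)$ and $Z_1(y,y')$). The only point demanding care --- and the closest thing to an obstacle --- is respecting this order of application and the direction of the inequality chain: \eqref{eq: CS 2} threads the intermediate witness $y'$ forward through $Z_1$ then $Z_2$, whereas \eqref{eq: CS 3} threads it backward through $Z_2$ then $Z_1$, in each case relying on the strict positivity produced by the first inequality to license the second application. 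Once this bookkeeping is in place the argument is entirely routine.
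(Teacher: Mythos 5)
Your proof is correct: it is exactly the routine verification that the paper omits with the remark that ``the proof of this proposition is straightforward,'' and your handling of the order of witnesses and the positivity propagation in the composition case is the intended argument.
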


The proof of this proposition is straightforward. 

\begin{corollary}
The largest directed simulation between two arbitrary \FLTSs exists.
The largest directed auto-simulation of a \FLTS is a pre-order.
\end{corollary}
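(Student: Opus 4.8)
The plan is to derive both assertions directly from Proposition~\ref{prop: HGDFJ 2}, exactly as the analogous corollary for ordinary simulations is obtained from the corresponding proposition for simulations.

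For the first assertion, I would let $\mZ$ be the set of all directed simulations between the two given \FLTSs $\mS$ and $\mS'$. By the third item of Proposition~\ref{prop: HGDFJ 2}, the union $\bigcup\mZ$ is again a directed simulation between $\mS$ and $\mS'$. Since every directed simulation between $\mS$ and $\mS'$ is a member of $\mZ$ and hence a subset of $\bigcup\mZ$, this union is the largest directed simulation, which establishes existence. Observe that $\mZ$ is non-empty, since the empty relation is trivially a directed simulation, so $\bigcup\mZ$ is well defined.

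For the second assertion, let $Z$ be the largest directed auto-simulation of an \FLTS $\mS = \tuple{S,\delta,L}$, which exists by the first assertion applied with $\mS' = \mS$. I must show that $Z$ is reflexive and transitive. Reflexivity follows from the first item of Proposition~\ref{prop: HGDFJ 2}: the identity relation $\{\tuple{x,x} \mid x \in S\}$ is a directed auto-simulation of $\mS$, hence it is contained in the largest one $Z$, so $Z(x,x)$ holds for all $x \in S$. Transitivity follows from the second item: if $Z(x,y)$ and $Z(y,z)$ hold, then $\tuple{x,z} \in Z \circ Z$, and since $Z \circ Z$ is a directed auto-simulation of $\mS$, maximality of $Z$ yields $Z \circ Z \subseteq Z$, whence $Z(x,z)$ holds.

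Because the whole argument is a direct application of a proposition already proved, there is no genuine obstacle here. The only point warranting mild care is the appeal to maximality in the transitivity step, where one must first invoke item~2 of Proposition~\ref{prop: HGDFJ 2} to confirm that $Z \circ Z$ is itself a directed auto-simulation before concluding $Z \circ Z \subseteq Z$.
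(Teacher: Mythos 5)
Your proof is correct and follows exactly the route the paper intends: the corollary is stated immediately after Proposition~\ref{prop: HGDFJ 2} and is meant to follow from its three items precisely as you describe (union of all directed simulations for existence, items 1 and 2 plus maximality for reflexivity and transitivity). No issues.
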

}

\subsection{The Positive Fragment of \fdPDL}

If we disallow the test operator (?), then the positive fragment of \fdPDL would simply be defined to be the largest fragment of \fdPDL that (disallows the test operator and) allows implication ($\to$) only in formulas of the form $a \to \varphi$ with $a \in [0,1]$. 
Allowing the test operator makes the matter more sophisticated, as shown below (cf.\ \cite{BSDL-P-LOGCOM}). 

{\em Formulas of \fpdPDL} and {\em programs} of \fpedPDL and \fpudPDL are defined inductively as follows: 
\begin{itemize}
	\item actions from $\SA$ are programs of \fpedPDL and \fpudPDL; 
	\item if $\alpha$ and $\beta$ are programs of \fpedPDL (resp.\ \fpudPDL), then $\alpha \circ \beta$, $\alpha \cup \beta$ and $\alpha^*$ are also programs of \fpedPDL (resp.\ \fpudPDL); 
	\item if $\varphi$ is a formula of \fpdPDL, then $\varphi?$ is a program of \fpedPDL and $(\varphi \to a)?$ is a program of \fpudPDL, for $a \in [0,1]$; 
	\item values from the interval $[0,1]$ and propositions from $\SP$ are formulas of \fpdPDL; 
	\item if $\varphi$ and $\psi$ are formulas of \fpdPDL, then 
		\begin{itemize}
		\item $\triangle \varphi$, $\varphi \land \psi$ and $\varphi \lor \psi$ are formulas of \fpdPDL,
		\item if $a \in [0,1]$, then $a \to \varphi$ is a formula of \fpdPDL,
		\item if $\alpha$ is a program of \fpedPDL, then $\tuple{\alpha}\varphi$ is a formula of \fpdPDL,   
		\item if $\alpha$ is a program of \fpudPDL, then $[\alpha]\varphi$ is a formula of \fpdPDL.   
		\end{itemize}
\end{itemize} 

We call \fpdPDL the {\em positive fragment} of \fdPDL. 

By \fpdK we denote the largest sublanguage of \fpdPDL that disallow all the program constructors. That is, only actions from $\SA$ are programs of \fpdK, and formulas of \fpdK are of the form $a$, $p$, $\triangle \varphi$, $\varphi \land \psi$, $\varphi \lor \psi$, $a \to \varphi$, $[\varrho]\varphi$ or $\tuple{\varrho}\varphi$, where $a \in [0,1]$, $p \in \SP$, $\varrho \in \SA$, and $\varphi$ and $\psi$ are formulas of \fpdK. 

Note that \fedPDL is the sublanguage of \fpdPDL that disallows the formula constructor $[\alpha]\varphi$, whereas \fedKz is the sublanguage of \fpdK that disallows the formula constructors $[\alpha]\varphi$, $\varphi \lor \psi$ and $a$ (with $a \in [0,1]$).

An \FLTS $\mS = \tuple{S,\delta,L}$ is said to be {\em witnessed} w.r.t.\ \fpdPDL if:
\begin{myitemize}
\myitem for every formula $\varphi$ of \fpdPDL and every $x \in S$, if the definition of $\varphi^\mS(x)$ in Definition~\ref{def: BHDJA} uses supremum (resp.\ infimum), then the set under the supremum (resp.\ infimum) has the biggest (resp.\ smallest) element if it is non-empty, 
\myitem for every program $\alpha$ of \fpedPDL or \fpudPDL and every $x,y \in S$, if the definition of $\alpha^\mS(x,y)$ in Definition~\ref{def: BHDJA} uses supremum (resp.\ infimum), then the set under the supremum (resp.\ infimum) has the biggest (resp.\ smallest) element if it is non-empty. 
\end{myitemize}

The notion of whether an \FLTS $\mS$ is {\em witnessed} w.r.t.\ \fpdK is defined analogously.  

Observe that if an \FLTS $\mS = \tuple{S,\delta,L}$ is finite, then it is witnessed w.r.t.\ \fpdPDL and \fpdK. If $\mS$ is image-finite, then it is witnessed w.r.t.\ \fpdK.

\subsection{Logical Characterizations of Crisp Directed Simulations between \FLTSs}

A formula $\varphi$ is said to be {\em preserved} under directed simulations between \FLTSs if, for every \FLTSs $\mS = \tuple{S, \delta, L}$ and $\mS' = \tuple{S', \delta', L'}$ that are witnessed w.r.t.\ \fpdPDL, for every directed simulation $Z$ between them, and for every $x \in S$ and $x' \in S'$, if $Z(x,x')$ holds, then $\varphi^\mS(x) \leq \varphi^\mSp(x')$.

\begin{theorem}\label{theorem: pres-dir-sim}
All formulas of \fpdPDL are preserved under directed simulations between \FLTSs.
\end{theorem}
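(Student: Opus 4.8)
The plan is to imitate the proof of Lemma~\ref{lemma: pres-sim}, but to carry \emph{three} assertions through a single simultaneous induction on the structure of formulas and programs. Writing $Z$ for the given directed simulation between \FLTSs $\mS$ and $\mSp$ witnessed w.r.t.\ \fpdPDL, the assertions are: (A)~for every formula $\varphi$ of \fpdPDL, $Z(x,x')$ implies $\varphi^\mS(x) \leq \varphi^\mSp(x')$; (B)~for every program $\alpha$ of \fpedPDL, the forward condition $[Z(x,x') \land (\alpha^\mS(x,y) > 0)] \to \E y' \in S'\,[(\alpha^\mS(x,y) \leq \alpha^\mSp(x',y')) \land Z(y,y')]$, exactly as in~\eqref{eq: CSx 2}; and (C)~for every program $\alpha$ of \fpudPDL, the \emph{backward} condition $[Z(x,x') \land (\alpha^\mSp(x',y') > 0)] \to \E y \in S\,[(\alpha^\mSp(x',y') \leq \alpha^\mS(x,y)) \land Z(y,y')]$. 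Theorem~\ref{theorem: pres-dir-sim} is then precisely assertion~(A).

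Assertion~(B) I would prove exactly as the corresponding assertion of Lemma~\ref{lemma: pres-sim}: its base case (actions) is Condition~\eqref{eq: CS 2}, and the cases for $\beta_1 \circ \beta_2$, $\beta_1 \cup \beta_2$, $\beta^*$ and the test $\varphi?$ go through unchanged, appealing inductively to~(A) and~(B). Assertion~(C) is the mirror image: its base case (actions) is now the new backward Condition~\eqref{eq: CS 3} of directed simulations, and the cases for $\beta_1 \circ \beta_2$, $\beta_1 \cup \beta_2$ and $\beta^*$ are obtained from those of~(B) by interchanging the roles of $\mS$ and $\mSp$ throughout, again using that $\mSp$ is witnessed so that the relevant program suprema are attained.

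For assertion~(A), the cases $a$, $p$, $\triangle\psi$, $\psi \land \xi$, $\psi \lor \xi$, $a \to \psi$ and $\tuple{\alpha}\psi$ (with $\alpha$ a \fpedPDL program) are handled precisely as in Lemma~\ref{lemma: pres-sim}, the diamond case now invoking~(B). The genuinely new case is the box $[\alpha]\psi$ with $\alpha$ a \fpudPDL program. Here I would show, for each fixed $y' \in S'$, that the left-hand infimum $([\alpha]\psi)^\mS(x)$ is bounded above by $\alpha^\mSp(x',y') \fto \psi^\mSp(y')$. If $\alpha^\mSp(x',y') = 0$ this term equals $1$ and there is nothing to prove; otherwise~(C) yields $y \in S$ with $\alpha^\mSp(x',y') \leq \alpha^\mS(x,y)$ and $Z(y,y')$, whence $\psi^\mS(y) \leq \psi^\mSp(y')$ by~(A). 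Using that $\fto$ is decreasing in its first argument and increasing in its second, one obtains $\alpha^\mS(x,y) \fto \psi^\mS(y) \leq \alpha^\mSp(x',y') \fto \psi^\mSp(y')$, and the left-hand infimum lies below its $y$-th term. Taking the infimum over all $y'$ then gives $([\alpha]\psi)^\mS(x) \leq ([\alpha]\psi)^\mSp(x')$.

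The point demanding the most care, and the one I expect to be the main obstacle, is the test operator inside the universal programs, which is restricted to the special form $(\varphi \to a)?$. In assertion~(C) for $\alpha = (\varphi \to a)?$, from $\alpha^\mSp(x',y') > 0$ one infers $x' = y'$, and the natural witness is $y = x$, so that $Z(y,y')$ reduces to the given $Z(x,x')$. The remaining inequality $\alpha^\mSp(x',y') = (\varphi^\mSp(x') \fto a) \leq (\varphi^\mS(x) \fto a) = \alpha^\mS(x,x)$ follows from $\varphi^\mS(x) \leq \varphi^\mSp(x')$ (supplied by~(A)) together with the fact that $\fto$ is decreasing in its first argument. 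This is exactly where the asymmetric design of \fpedPDL and \fpudPDL pays off: an unrestricted test $\varphi?$ would point the residuum the wrong way and break~(C), whereas the guarded form $(\varphi \to a)?$ reverses the monotonicity so that the backward condition survives. Getting this alignment right---forward via diamonds and \fpedPDL, backward via boxes and \fpudPDL, each matched to the correct direction of monotonicity of $\fto$---is the crux; everything else is a routine adaptation of Lemma~\ref{lemma: pres-sim}.
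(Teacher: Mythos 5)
Your proposal is correct and follows essentially the same route as the paper: Theorem~\ref{theorem: pres-dir-sim} is derived from a simultaneous induction establishing exactly the three assertions \eqref{eq: CDSx 1}--\eqref{eq: CDSx 3} of Lemma~\ref{lemma: pres-dir-sim}, with the base case of the backward assertion given by Condition~\eqref{eq: CS 3} and the test case $(\varphi \to a)?$ handled via the antitonicity of $\fto$ in its first argument, just as in the paper. The only cosmetic difference is your treatment of the box case of assertion~(A): you bound the infimum directly term by term over all $y'$, whereas the paper argues by contradiction using the witnessed property to pick a $y'$ attaining the infimum; both are sound.
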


This theorem follows immediately from the first assertion of the following lemma, which is a counterpart of Lemma~\ref{lemma: pres-sim}.

\begin{lemma} \label{lemma: pres-dir-sim}
Let $\mS = \tuple{S, \delta, L}$ and $\mS' = \tuple{S', \delta', L'}$ be \FLTSs witnessed w.r.t.\ \fpdPDL and $Z$ be a directed simulation between them. Then, the following assertions hold for every $x,y \in S$, $x',y' \in S'$, every formula $\varphi$ of \fpdPDL, every program $\alpha$ of \fpedPDL and every program $\gamma$ of \fpudPDL, where $\to$ and $\land$ are the usual crisp logical connectives:
\begin{eqnarray}
&&\!\!\!\!\!\!\!\!\!\!\!\!\!\!\! Z(x,x') \to (\varphi^\mS(x) \leq \varphi^\mSp(x')) \label{eq: CDSx 1} \\[0.5ex]
&&\!\!\!\!\!\!\!\!\!\!\!\!\!\!\! [Z(x,x') \,\land\, (\alpha^\mS(x,y) \!>\! 0)] \to \nonumber\\
&&\!\!\!\!\!\!\!\!\!\!\!\!\!\!\! \qquad \E y' \in S'\, [(\alpha^\mS(x,y) \!\leq\! \alpha^\mSp(x',y')) \,\land\, Z(y,y')] \label{eq: CDSx 2} \\
&&\!\!\!\!\!\!\!\!\!\!\!\!\!\!\! [Z(x,x') \,\land\, (\gamma^\mSp(x',y') \!>\! 0)] \to \nonumber\\
&&\!\!\!\!\!\!\!\!\!\!\!\!\!\!\! \qquad \E y \in S\, [(\gamma^\mSp(x',y') \!\leq\! \gamma^\mS(x,y)) \,\land\, Z(y,y')]. \label{eq: CDSx 3}
\end{eqnarray}
\end{lemma}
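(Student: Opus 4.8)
The plan is to prove all three assertions simultaneously by a single induction on the combined structure of formulas $\varphi$ of \fpdPDL, programs $\alpha$ of \fpedPDL and programs $\gamma$ of \fpudPDL, exactly mirroring the proof of Lemma~\ref{lemma: pres-sim}. The base cases and the ``easy'' inductive cases of \eqref{eq: CDSx 1} (constants $a$, propositions $p$, and the connectives $\triangle$, $\land$, $\lor$, $a \to \psi$) are handled verbatim as in Lemma~\ref{lemma: pres-sim}, since they do not involve programs. Likewise, the program cases $\alpha = \beta_1 \circ \beta_2$, $\alpha = \beta_1 \cup \beta_2$, $\alpha = \beta^*$ and $\alpha = (\psi?)$ for assertion \eqref{eq: CDSx 2} are copied directly; note that in the test case $\alpha = (\psi?)$ the subformula $\psi$ is a formula of \fpdPDL, so the induction hypothesis \eqref{eq: CDSx 1} applies to it.

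The genuinely new work lies in assertion \eqref{eq: CDSx 3} for programs $\gamma$ of \fpudPDL, and in the case $\varphi = [\gamma]\psi$ of \eqref{eq: CDSx 1}. For \eqref{eq: CDSx 3}, the atomic case $\gamma = \varrho$ follows from Condition~\eqref{eq: CS 3} (the ``backward'' condition of directed simulation), and the cases $\gamma = \beta_1 \circ \beta_2$, $\gamma = \beta_1 \cup \beta_2$ and $\gamma = \beta^*$ are the mirror images of the corresponding cases for \eqref{eq: CDSx 2}, with the roles of $\mS$ and $\mSp$ swapped and the inequalities reversed, using the witnessedness of $\mSp$ to extract a witness for each supremum. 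The interesting case is the test program $\gamma = (\psi \to a)?$: here I would suppose $Z(x,x')$ holds and $\gamma^\mSp(x',y') > 0$, which forces $x' = y'$ and $\gamma^\mSp(x',y') = (\psi^\mSp(x') \fto a)$; taking $y = x$ and using the induction hypothesis \eqref{eq: CDSx 1} on $\psi$ (so $\psi^\mS(x) \leq \psi^\mSp(x')$) together with the fact that $\fto$ is \emph{decreasing} in its first argument yields $(\psi^\mSp(x') \fto a) \leq (\psi^\mS(x) \fto a) = \gamma^\mS(x,y)$, as required. This reversed monotonicity is exactly why \fpudPDL uses the guard form $(\psi \to a)?$ rather than $\psi?$.

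For the case $\varphi = [\gamma]\psi$ of \eqref{eq: CDSx 1}, I would argue for a contradiction just as in the $\tuple{\alpha}\psi$ case, but dually. Assume $Z(x,x')$ holds and $\varphi^\mS(x) > \varphi^\mSp(x')$; since $\mSp$ is witnessed w.r.t.\ \fpdPDL the infimum defining $\varphi^\mSp(x') = \inf\{\gamma^\mSp(x',y') \fto \psi^\mSp(y') \mid y' \in S'\}$ is attained at some $y' \in S'$. From $\varphi^\mSp(x') < \varphi^\mS(x) \leq 1$ we get $\gamma^\mSp(x',y') > \psi^\mSp(y')$, hence in particular $\gamma^\mSp(x',y') > 0$. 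Now the induction hypothesis \eqref{eq: CDSx 3} supplies $y \in S$ with $\gamma^\mSp(x',y') \leq \gamma^\mS(x,y)$ and $Z(y,y')$, and the induction hypothesis \eqref{eq: CDSx 1} on $\psi$ gives $\psi^\mS(y) \leq \psi^\mSp(y')$. Using that $\fto$ is decreasing in the first argument and increasing in the second, $\gamma^\mS(x,y) \fto \psi^\mS(y) \leq \gamma^\mSp(x',y') \fto \psi^\mSp(y') = \varphi^\mSp(x')$; but $\varphi^\mS(x) \leq \gamma^\mS(x,y) \fto \psi^\mS(y)$ since $\varphi^\mS(x)$ is an infimum over all states, giving $\varphi^\mS(x) \leq \varphi^\mSp(x')$ and contradicting the assumption.

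The main obstacle, and the point demanding the most care, is managing the direction of every inequality: the box modality and the $(\psi \to a)?$ test both invert monotonicity through the residuum, so the witnesses must be pulled in the opposite direction (from $\mSp$ back into $\mS$ via \eqref{eq: CS 3}) compared with the simulation proof. Once the correct monotonicity of $\fto$ in each argument is tracked and witnessedness is invoked to attain each supremum and infimum, the remaining steps are routine adaptations of Lemma~\ref{lemma: pres-sim}.
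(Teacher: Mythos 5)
Your proposal is correct and follows essentially the same route as the paper's own proof: a simultaneous induction in which the only genuinely new cases are $\varphi = [\gamma]\psi$ for assertion~\eqref{eq: CDSx 1} (handled by attaining the infimum in $\mSp$ via witnessedness, pulling a witness back into $\mS$ through~\eqref{eq: CDSx 3}, and using the monotonicity of $\fto$) and the test program $\gamma = (\psi \to a)?$ for assertion~\eqref{eq: CDSx 3} (handled exactly as you describe, via the decreasing monotonicity of $\fto$ in its first argument). Your justification that $\gamma^\mSp(x',y') > 0$ and your explicit final step $\varphi^\mS(x) \leq \gamma^\mS(x,y) \fto \psi^\mS(y) \leq \varphi^\mSp(x')$ are both valid and, if anything, slightly more detailed than the paper's wording.
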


\begin{proof}
We prove this lemma by induction analogously as done for Lemma~\ref{lemma: pres-sim}. 

In comparison with the proof of Lemma~\ref{lemma: pres-sim}, for the assertion~\eqref{eq: CDSx 1}, we only need to consider the additional case when $\varphi = [\gamma]\psi$ (and $\gamma$ is a program of \fpudPDL). Consider this case. For a contradiction, suppose that $\varphi^\mS(x) > \varphi^\mSp(x')$. 
Since $\mSp$ is witnessed w.r.t.\ \fpdPDL, there exists $y' \in S'$ such that \mbox{$\varphi^\mSp(x') = (\gamma^\mSp(x',y') \fto \psi^\mSp(y'))$}. Since $\varphi^\mS(x) > \varphi^\mSp(x')$, we have $\varphi^\mSp(x') < 1$, which implies that $\gamma^\mSp(x',y') > 0$. By the induction assumption~\eqref{eq: CDSx 3}, there exists $y \in S$ such that $\gamma^\mSp(x',y') \leq \gamma^\mS(x,y)$
and $Z(y,y')$ holds. Since $Z(y,y')$ holds, by the induction assumption, $\psi^\mS(y) \leq \psi^\mSp(y')$. Since $\fto$ is decreasing w.r.t.\ the first argument and increasing w.r.t.\ the second argument, it follows that 
\[ (\gamma^\mS(x,y) \fto \psi^\mS(y)) \leq (\gamma^\mSp(x',y') \fto \psi^\mSp(y')), \] 
which means $\varphi^\mS(x) \leq \varphi^\mSp(x')$, which contradicts the assumption $\varphi^\mS(x) > \varphi^\mSp(x')$. This completes the proof of the assertion~\eqref{eq: CDSx 1}. 

The proof of the assertion~\eqref{eq: CDSx 2} is obtained from the proof of the assertion~\eqref{eq: CSx 2} of Lemma~\ref{lemma: pres-sim} by replacing the occurrences of \fedPDL, \eqref{eq: CSx 1} and \eqref{eq: CSx 2} with \fpdPDL, \eqref{eq: CDSx 1} and \eqref{eq: CDSx 2}, respectively. 

The proof of the assertion~\eqref{eq: CDSx 3} is dual to the proof of the assertion~\eqref{eq: CDSx 2}. The only special difference is that instead of the case $\alpha = (\psi?)$ we need to consider the case $\gamma = (\psi \to a)?$, with $a \in [0,1]$. Consider this case. 
Suppose that $Z(x,x')$ holds and $\gamma^\mSp(x',y') > 0$. Thus, $x' = y'$ and $\gamma^\mSp(x',y') = (\psi^\mSp(x') \fto a)$. Since $Z(x,x')$ holds, by the induction assumption~\eqref{eq: CSx 1}, $\psi^\mS(x) \leq \psi^\mSp(x')$. Since $\fto$ is decreasing w.r.t.\ the first argument, by choosing $y = x$, we have that 
\[ \gamma^\mSp(x',y') = (\psi^\mSp(x') \fto a) \leq (\psi^\mS(x) \fto a) = \gamma^\mS(x,y), \]
and the induction hypothesis \eqref{eq: CDSx 3} holds.
\end{proof}

The following lemma is a counterpart of Lemma~\ref{lemma: pres-dir-sim} for \fpdK (instead of \fpdPDL). 
Its proof can obtained from the proof of the assertion~\eqref{eq: CDSx 1} of Lemma~\ref{lemma: pres-dir-sim} by simplification, using~\eqref{eq: CS 2} and~\eqref{eq: CS 3} instead of~\eqref{eq: CDSx 2} and~\eqref{eq: CDSx 3}, respectively.

\begin{lemma} \label{lemma: pres-dir-sim 2}
Let $\mS = \tuple{S, \delta, L}$ and $\mS' = \tuple{S', \delta', L'}$ be \FLTSs witnessed w.r.t.\ \fpdK and $Z$ be a directed simulation between them. Then, for every $x \in S$ and $x' \in S'$, if $Z(x,x')$ holds, then $\varphi^\mS(x) \leq \varphi^\mSp(x')$ for all formulas $\varphi$ of \fpdK.
\end{lemma}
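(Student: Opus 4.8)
The plan is to prove the assertion by induction on the structure of the formula $\varphi$ of \fpdK, mirroring the proof of assertion~\eqref{eq: CDSx 1} in Lemma~\ref{lemma: pres-dir-sim} but exploiting the fact that \fpdK has no program constructors. Since the only programs of \fpdK are the actions $\varrho \in \SA$, I will never need the separate inductive statements about programs (the analogues of \eqref{eq: CDSx 2} and \eqref{eq: CDSx 3}); instead, whenever a modal operator is unfolded I can appeal directly to the defining conditions \eqref{eq: CS 2} and \eqref{eq: CS 3} of a directed simulation, which are exactly the transfer properties for a single action. Throughout, I fix $x \in S$ and $x' \in S'$ with $Z(x,x')$ and show $\varphi^\mS(x) \leq \varphi^\mSp(x')$.

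First I would dispose of the routine cases. For $\varphi = a$ the two values coincide; for $\varphi = p$ the inequality $p^\mS(x) = L(x)(p) \leq L'(x')(p) = p^\mSp(x')$ is exactly Condition~\eqref{eq: CS 1}. The cases $\triangle\psi$, $\psi \land \xi$ and $\psi \lor \xi$ follow from the induction hypotheses on the immediate subformulas together with the monotonicity of $\triangle$, $\min$ and $\max$; the case $a \to \psi$ follows because $\fto$ is increasing in its second argument, so $(a \fto \psi^\mS(x)) \leq (a \fto \psi^\mSp(x'))$. None of these uses witnessedness.

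The two modal cases are the heart of the argument. For $\varphi = \tuple{\varrho}\psi$ I argue by contradiction: assuming $\varphi^\mS(x) > \varphi^\mSp(x')$, witnessedness of $\mS$ w.r.t.\ \fpdK yields a $y \in S$ attaining the supremum, so $\varphi^\mS(x) = \varrho^\mS(x,y) \fand \psi^\mS(y)$; this is positive, hence $\varrho^\mS(x,y) = \delta(x,\varrho,y) > 0$, and Condition~\eqref{eq: CS 2} supplies a $y'$ with $\varrho^\mS(x,y) \leq \varrho^\mSp(x',y')$ and $Z(y,y')$. The induction hypothesis gives $\psi^\mS(y) \leq \psi^\mSp(y')$, and monotonicity of $\fand$ then forces $\varrho^\mS(x,y) \fand \psi^\mS(y) \leq \varrho^\mSp(x',y') \fand \psi^\mSp(y') \leq \varphi^\mSp(x')$, a contradiction. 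The case $\varphi = [\varrho]\psi$ is dual: witnessedness of $\mSp$ gives a $y'$ attaining the infimum with $\varphi^\mSp(x') = (\varrho^\mSp(x',y') \fto \psi^\mSp(y'))$; from $\varphi^\mSp(x') < 1$ I read off $\varrho^\mSp(x',y') > 0$, invoke Condition~\eqref{eq: CS 3} to obtain $y$ with $\varrho^\mSp(x',y') \leq \varrho^\mS(x,y)$ and $Z(y,y')$, apply the induction hypothesis $\psi^\mS(y) \leq \psi^\mSp(y')$, and conclude from the fact that $\fto$ is decreasing in its first and increasing in its second argument that $(\varrho^\mS(x,y) \fto \psi^\mS(y)) \leq \varphi^\mSp(x')$, which contradicts $\varphi^\mS(x) \leq (\varrho^\mS(x,y) \fto \psi^\mS(y))$.

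I expect the main obstacle to be bookkeeping rather than conceptual: one must be scrupulous about which system is assumed witnessed for each modality --- the source $\mS$ for the existential operator $\tuple{\varrho}$, where a supremum must be attained, and the target $\mSp$ for the universal operator $[\varrho]$, where an infimum must be attained --- and about the direction of transfer, using the ``forward'' condition~\eqref{eq: CS 2} for $\tuple{\varrho}$ and the ``backward'' condition~\eqref{eq: CS 3} for $[\varrho]$. Once these pairings are fixed, the remaining steps are the same monotonicity arguments already carried out in Lemmas~\ref{lemma: pres-sim} and~\ref{lemma: pres-dir-sim}.
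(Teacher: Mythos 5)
Your proposal is correct and is exactly the argument the paper intends: the paper's proof of this lemma is stated as ``obtained from the proof of assertion~\eqref{eq: CDSx 1} of Lemma~\ref{lemma: pres-dir-sim} by simplification, using~\eqref{eq: CS 2} and~\eqref{eq: CS 3} instead of~\eqref{eq: CDSx 2} and~\eqref{eq: CDSx 3}'', which is precisely your induction on formulas with the direct appeal to the simulation conditions in the two modal cases. Your bookkeeping of which system must be witnessed ($\mS$ for $\tuple{\varrho}$, $\mSp$ for $[\varrho]$) and which transfer condition applies to each modality matches the paper's treatment.
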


The following theorem is a counterpart of Theorem~\ref{theorem: sim HM} devoted to the Hennessy-Milner property of crisp directed simulations between \FLTSs. It is formulated only for image-finite \FLTSs. 

\begin{theorem} \label{theorem: dir-sim HM}
Let $\mS = \tuple{S, \delta, L}$ and $\mS' = \tuple{S', \delta', L'}$ be image-finite \FLTSs. Then, $Z = \{\tuple{x,x'} \in S \times S' \mid \varphi^\mS(x) \leq \varphi^\mSp(x')$ for all formulas $\varphi$ of \fpdKwxs$\}$ is the largest directed simulation between $\mS$ and~$\mSp$.
\end{theorem}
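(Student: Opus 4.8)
The plan is to prove that the relation $Z$ is itself a directed simulation between $\mS$ and $\mSp$; its maximality is then automatic. Indeed, an image-finite \FLTS is witnessed w.r.t.\ \fpdK, so Lemma~\ref{lemma: pres-dir-sim 2} applies to both $\mS$ and $\mSp$: whenever $Z_0$ is a directed simulation and $Z_0(x,x')$ holds, we get $\varphi^\mS(x)\le\varphi^\mSp(x')$ for every formula $\varphi$ of \fpdK, i.e.\ $\tuple{x,x'}\in Z$. Hence every directed simulation is contained in $Z$, and it remains only to check that $Z$ satisfies Conditions~\eqref{eq: CS 1}, \eqref{eq: CS 2} and~\eqref{eq: CS 3}. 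Condition~\eqref{eq: CS 1} is immediate, since each proposition $p\in\SP$ is a formula of \fpdK and $p^\mS(x)=L(x)(p)$, so $Z(x,x')$ forces $L(x)(p)\le L'(x')(p)$ for all $p$.

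For the forward Condition~\eqref{eq: CS 2} I would reuse the argument of Theorem~\ref{theorem: sim HM}, replacing modal saturatedness by image-finiteness. Assume $Z(x,x')$ and $\delta(x,\varrho,y)=a>0$, let $Y'=\{y'\in S'\mid\delta'(x',\varrho,y')\ge a\}$ (finite), and suppose for contradiction that $Z(y,y')$ fails for every $y'\in Y'$. Each failure yields $\varphi_{y'}$ with $\varphi_{y'}^\mS(y)>\varphi_{y'}^\mSp(y')$, and then $\psi_{y'}=\triangle(\varphi_{y'}^\mS(y)\to\varphi_{y'})$ is a formula of \fpdK with $\psi_{y'}^\mS(y)=1$ and $\psi_{y'}^\mSp(y')=0$. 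The finite conjunction $\Psi=\bigwedge\{\psi_{y'}\mid y'\in Y'\}$ satisfies $\Psi^\mS(y)=1$ and $\Psi^\mSp(y')=0$ for all $y'\in Y'$, so $\tuple{\varrho}\Psi$ has value at least $a$ at $x$ while, using image-finiteness of $\mSp$ to replace the supremum by a maximum over successors of weight $<a$, value strictly below $a$ at $x'$ --- contradicting $Z(x,x')$.

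The essential new step is the backward Condition~\eqref{eq: CS 3}, which I expect to be the main obstacle. Assume $Z(x,x')$ and $\delta'(x',\varrho,y')=a>0$, let $Y=\{y\in S\mid\delta(x,\varrho,y)\ge a\}$, and suppose for contradiction that $Z(y,y')$ fails for every $y\in Y$; as above this gives formulas $\psi_y$ of \fpdK with $\psi_y^\mS(y)=1$ and $\psi_y^\mSp(y')=0$. The tempting separating formula is the box formula $[\varrho]\chi$ with $\chi=\bigvee_{y\in Y}\psi_y$, but it fails: the $\varrho$-successors $z$ of $x$ with $0<\delta(x,\varrho,z)<a$ are not governed by the $\psi_y$, and for such $z$ the term $\delta(x,\varrho,z)\fto\chi^\mS(z)$ of the defining infimum can be as small as the value $a\fto 0$ produced on the $\mSp$ side at $y'$, so no strict separation survives (under the G\"odel t-norm, for instance, both sides collapse to $0$).

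I would resolve this by neutralizing the low-weight successors with a constant disjunct. Let $c=\max\{\delta(x,\varrho,z)\mid z\in S,\ 0<\delta(x,\varrho,z)<a\}$, with the convention $\max\emptyset=0$; by image-finiteness of $\mS$ this maximum is attained and $c<a$. Put $\eta=c\lor\chi$ (read as the constant $c$ when $Y=\emptyset$) and $\varphi=[\varrho]\eta$, a formula of \fpdK. On the $\mS$ side, $\eta^\mS(z)\ge\delta(x,\varrho,z)$ for every $\varrho$-successor $z$ of $x$ --- for $z\in Y$ because $\psi_z$ forces $\eta^\mS(z)=1$, and for the remaining successors because their weight is at most $c\le\eta^\mS(z)$ --- so every term of the infimum equals $1$ and $\varphi^\mS(x)=1$. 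On the $\mSp$ side, the term at $y'$ is $a\fto\eta^\mSp(y')=a\fto c<1$, since $\eta^\mSp(y')=\max(c,0)=c<a$; hence $\varphi^\mSp(x')<1=\varphi^\mS(x)$. As $\varphi$ is a formula of \fpdK, this contradicts $Z(x,x')$, which establishes~\eqref{eq: CS 3} and finishes the proof.
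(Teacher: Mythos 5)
Your proposal is correct and follows essentially the same route as the paper's proof: reduce to showing $Z$ is a directed simulation via Lemma~\ref{lemma: pres-dir-sim 2}, handle Condition~\eqref{eq: CS 2} as in Theorem~\ref{theorem: sim HM}, and establish Condition~\eqref{eq: CS 3} with a box formula $[\varrho](\bigvee\Psi \lor c)$ whose constant disjunct neutralizes the low-weight successors. The only (immaterial) difference is that you take the constant to be $a_l$ itself, whereas the paper uses the midpoint $a_c=(a_l+a)/2$; both choices work.
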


\begin{proof}
Clearly, all image-finite \FLTSs are witnessed w.r.t.\ \fpdK. By Lemma~\ref{lemma: pres-dir-sim 2}, it is sufficient to prove that $Z$ is a directed simulation between $\mS$ and~$\mSp$. Let $x \in S$, $x' \in S'$, $p \in \SP$ and $\varrho \in \SA$. We need to prove Conditions~\eqref{eq: CS 1}, \eqref{eq: CS 2} and \eqref{eq: CS 3}. 
Condition~\eqref{eq: CS 1} clearly holds by the definition of $Z$. 
Condition~\eqref{eq: CS 2} can be proved analogously as done for Theorem~\ref{theorem: sim HM} by replacing ``witnessed w.r.t.~\fedKz'' and ``modally saturated w.r.t.\ \fedKz'' with the assumption that $\mSp$ is image-finite and replacing the remaining occurrence of \fedKz with \fpdK. 

We now prove Condition~\eqref{eq: CS 3}. Let $y' \in S'$ and suppose that $Z(x,x')$ holds and $\varrho^\mSp(x',y') = a > 0$. Let $Y = \{y \in S \mid \varrho^\mS(x,y) \geq a\}$. Since $\mS$ is image-finite, $Y$ is finite. We need to prove that there exists $y \in Y$ such that $Z(y,y')$ holds. For a contradiction, suppose that, for every $y \in Y$, $Z(y,y')$ does not hold, which means there exists a formula $\varphi_y$ of \fpdK such that $\varphi_y^\mS(y) > \varphi_y^\mSp(y')$. For every $y \in Y$, let $\psi_y = \triangle(\varphi_y^\mS(y) \to \varphi_y)$, which is a formula of \fpdK. Let $\Psi = \{\psi_y \mid y \in Y\}$. Observe that, for every $y \in Y$, $\psi_y^\mS(y) = 1$ and $\psi_y^\mSp(y') = 0$. Let $Y_l = \{y \in S \mid$ $0 < \varrho^\mS(x,y) < a\}$ and $a_l = \sup \{\varrho^\mS(x,y) \mid y \in Y_l\}$, where the subscript $l$ stands for ``left''. Note that, if $Y_l = \emptyset$, then $a_l = 0$, else $a_l = \max \{\varrho^\mS(x,y) \mid y \in Y_l\}$ (since $\mS$ is image-finite). In any case, $a_l < a$. Let $a_c = (a_l + a) / 2$. Thus, $a_l < a_c < a$. Let $\varphi = [\varrho](\bigvee\!\Psi \lor a_c)$. It is a formula of \fpdK. 
Since $\psi_y^\mSp(y') = 0$ for all $y \in Y$, $(\bigvee\!\Psi)^\mSp(y') = 0$. Hence, $(\bigvee\!\Psi \lor a_c)^\mSp(y') = a_c$ and $\varphi^\mSp(x') \leq (a \fto a_c)$. It follows that $\varphi^\mSp(x') < 1$. Let's estimate $\varphi^\mS(x)$. For every $y \in Y$, since $\psi_y^\mS(y) = 1$, $(\bigvee\!\Psi \lor a_c)^\mS(y) = 1$. In addition, $(\bigvee\!\Psi \lor a_c)^\mS(y) \geq a_c$ for all $y \in Y_l$. Since $a_l < a_c$, we can conclude that $\varphi^\mS(x) = 1$. This contradicts the facts that $Z(x,x')$ holds and $\varphi^\mSp(x') < 1$. This completes the proof. 
\end{proof}

Let $\mS = \tuple{S, \delta, L}$ and $\mS' = \tuple{S', \delta', L'}$ be \FLTSs and let $x \in S$ and $x' \in S'$. We write $x \lesssim^{ds} x'$ to denote that there exists a directed simulation $Z$ between $\mS$ and $\mSp$ such that $Z(x,x')$ holds. We also write \mbox{$x \leq^{\mathit{pos}} x'$} (resp.\ \mbox{$x \leq^{\mathit{pos}}_K x'$}) to denote that $\varphi^\mS(x) \leq \varphi^\mSp(x')$ for all formulas $\varphi$ of~\fpdPDL (resp.\ \fpdK). The following corollary follows immediately from Theorems~\ref{theorem: dir-sim HM} and~\ref{theorem: pres-dir-sim}. 

\begin{corollary} \label{cor: CDS}
Let $\mS = \tuple{S, \delta, L}$ and $\mS' = \tuple{S', \delta', L'}$ be image-finite \FLTSs and let $x \in S$ and $x' \in S'$. 
\begin{itemize}
\item Then, 
\longer{\[ x \lesssim^{ds} x'\ \ \textrm{iff}\ \ x \leq^{\mathit{pos}}_K x', \]}
\shorter{$x \lesssim^{ds} x'$ iff $x \leq^{\mathit{pos}}_K x'$,}
and therefore whether $x \leq^{\mathit{pos}}_K x'$ or not does not depend on the used t-norm $\fand$. 

\item If $\mS$ and $\mSp$ are witnessed w.r.t.~\fpdPDL, then
\longer{\[ x \leq^{\mathit{pos}} x'\ \ \textrm{iff}\ \ x \lesssim^{ds} x'\ \ \textrm{iff}\ \ x \leq^{\mathit{pos}}_K x', \]}
\shorter{\mbox{$x \leq^{\mathit{pos}} x'$} iff $x \lesssim^{ds} x'$ iff $x \leq^{\mathit{pos}}_K x'$,}
and therefore whether \mbox{$x \leq^{\mathit{pos}} x'$} or not does not depend on the used t-norm~$\fand$. 
\end{itemize}
\end{corollary}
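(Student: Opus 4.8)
The plan is to derive the corollary directly from the two theorems just established, treating each equivalence as a short chain of implications. I begin with the first assertion, $x \lesssim^{ds} x'$ iff $x \leq^{\mathit{pos}}_K x'$, which I would obtain from Theorem~\ref{theorem: dir-sim HM} alone. Write $Z$ for the relation $\{\tuple{x,x'} \mid x \leq^{\mathit{pos}}_K x'\}$ of that theorem, which (for image-finite $\mS$ and $\mSp$) is the \emph{largest} directed simulation between them. If $x \leq^{\mathit{pos}}_K x'$, then $\tuple{x,x'} \in Z$ and, since $Z$ is itself a directed simulation, $x \lesssim^{ds} x'$. Conversely, if $x \lesssim^{ds} x'$, then some directed simulation $Z_0$ satisfies $Z_0(x,x')$; as $Z$ is the largest such relation we have $Z_0 \subseteq Z$, hence $\tuple{x,x'} \in Z$, i.e.\ $x \leq^{\mathit{pos}}_K x'$.

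For the second assertion I would extend this to a three-way equivalence under the added hypothesis that $\mS$ and $\mSp$ are witnessed w.r.t.\ \fpdPDL, arguing the cycle $x \leq^{\mathit{pos}} x' \Rightarrow x \leq^{\mathit{pos}}_K x' \Rightarrow x \lesssim^{ds} x' \Rightarrow x \leq^{\mathit{pos}} x'$. The first implication holds because every formula of \fpdK is a formula of \fpdPDL, so agreement on all of \fpdPDL entails agreement on the sublanguage \fpdK. The second implication is exactly the first assertion, which still applies since image-finiteness is retained. The third implication is Theorem~\ref{theorem: pres-dir-sim}: under the witnessedness hypothesis, preservation guarantees $\varphi^\mS(x) \leq \varphi^\mSp(x')$ for all formulas $\varphi$ of \fpdPDL whenever a directed simulation relates $x$ and~$x'$, which is $x \leq^{\mathit{pos}} x'$.

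For the two ``therefore'' clauses I would observe that the relation $\lesssim^{ds}$ is defined entirely through Conditions~\eqref{eq: CS 1}--\eqref{eq: CS 3}, which refer only to the inequalities $\leq$ and $>$ between transition degrees and labels and never mention the t-norm $\fand$ or its residuum; hence whether $x \lesssim^{ds} x'$ holds is intrinsically independent of the chosen t-norm. Since $x \leq^{\mathit{pos}}_K x'$ (and, under the extra hypothesis, $x \leq^{\mathit{pos}} x'$) has just been shown to coincide with $x \lesssim^{ds} x'$, these logical relations inherit the same t-norm independence, even though the individual formula values $\varphi^\mS(x)$ themselves do depend on~$\fand$.

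Since all the substantive work is carried by Theorems~\ref{theorem: dir-sim HM} and~\ref{theorem: pres-dir-sim}, I do not anticipate a genuine obstacle; the only points requiring care are the bookkeeping that image-finiteness supplies witnessedness w.r.t.\ \fpdK, the syntactic inclusion of \fpdK in \fpdPDL used in the second assertion, and the conceptual observation that the t-norm independence transfers across the established equivalences rather than being a property of each formula in isolation.
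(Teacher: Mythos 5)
Your proof is correct and follows essentially the same route as the paper, which simply notes that the corollary ``follows immediately from Theorems~\ref{theorem: dir-sim HM} and~\ref{theorem: pres-dir-sim}''; your write-up just makes explicit the chain of implications and the observation that t-norm independence transfers from $\lesssim^{ds}$ across the equivalences.
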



\section{Conclusions}
\label{section: conc}

Simulation and directed simulation are useful notions for comparing observational behaviors of automata and LTSs. Before the current work, there was the lack of logical characterizations of crisp simulations between fuzzy structures w.r.t.\ fuzzy modal logics (or their variants) that use a residuated lattice or a t-norm-based semantics. Furthermore, logical characterizations of crisp directed simulations for fuzzy structures had not been studied. 

In this paper, we have provided and proved logical characterizations of crisp simulations and crisp directed simulations between \FLTSs w.r.t.\ fragments of the fuzzy modal logic \fdPDL under a general t-norm-based semantics. The preservation result for crisp simulations (resp.\ crisp directed simulations) has been formulated for \fedPDL (resp.\ \fpdPDL), whereas the Hennessy-Milner property for them has been formulated for a minimized fragment of \fedPDL (resp.\ \fpdPDL) in order to increase the generality.  


\bibliography{BSfDL}
\bibliographystyle{IEEEtran}

\end{document}